\newcommand*{\HASAPPENDIX}{}
\newcommand{\withappendix}[1]{\ifthenelse{\isundefined{\HASAPPENDIX}}{}{#1}}
\newcommand{\withauthors}[1]{\ifthenelse{\isundefined{\HASAPPENDIX}}{#1}{}}
\newcommandx{\mytodo}[2][1=]{
\todo[linecolor=red,backgroundcolor=red!25,bordercolor=red,#1]{#2}{}}
\tikzset{every state/.style={minimum size=0pt}}
\newcommand{\cupdot}{\mathbin{\mathaccent\cdot\cup}}
\newcommand{\Nat}{\mathbb{N}}
\newcommand{\op}[1]{\ensuremath{\mathsf{#1}}}
\newcommand{\mc}[1]{\mathcal{#1}}
\newcommand{\parent}{\op{\uparrow}}
\newcommand{\scc}{\op{scc}}
\begin{document}
\title{New Optimizations and Heuristics for Determinization of Büchi Automata
\thanks{The final authenticated publication is
available online at \url{https://doi.org/10.1007/978-3-030-31784-3_18}}}
\author{Christof Löding \and
Anton Pirogov
\thanks{This work is supported by the German research council (DFG) Research Training Group 2236 UnRAVeL}}
\institute{RWTH Aachen University, Templergraben 55, 52062 Aachen, Germany \\
\email{\{loeding,pirogov\}@cs.rwth-aachen.de} }
\maketitle
\begin{abstract}
  In this work, we present multiple new optimizations and heuristics for the
  determinization of Büchi automata that exploit a number of semantic and structural
  properties, most of which may be applied together with any determinization procedure.
  We built a prototype implementation where all the presented heuristics can be freely
  combined and evaluated them, comparing our implementation with the state-of-the-art tool
  \texttt{spot} on multiple data sets with different characteristics. Our results show
  that the proposed optimizations and heuristics can in some cases significantly decrease
  the size of the resulting deterministic automaton.

\keywords{Büchi \and parity \and automata \and determinization \and heuristics}
\end{abstract}

\section{Introduction}
\label{fig:introduction}

Nondeterministic Büchi automata (NBA) \cite{buchi1966symposium} are a
well-established formalism for the representation of properties of
non-terminating executions of finite state programs, and arise often
as a low-level representation obtained by translation of some formula
describing the desired behaviour in a logic like e.g. LTL
\cite{pnueli1977temporal}. In the field of model checking, Büchi
automata are an important tool in verification algorithms for finite
state systems (see, e.g., \cite{baier2008principles}). In order to
capture the full class of regular $\omega$-languages, Büchi automata
need to be nondeterministic. In some applications, however, the
algorithms require the property to be represented by a deterministic
automaton. For example, in probabilistic model checking, the natural
product of an automaton with a Markov chain requires the automaton to
be deterministic in order to produce again a Markov chain. Therefore,
many algorithms in this setting need deterministic automata as
input (see, e.g., \cite[Section~10.3]{baier2008principles}). Another
example is the problem of synthesis of finite state systems from
$\omega$-regular specifications, where the specification can be
translated into a game over a finite graph based on a deterministic
parity automaton \cite{Thomas08,MeyerSL18}
.

For this reason, the determinization of Büchi automata is a
well-researched problem. In order to obtain a deterministic automaton
model that is as expressive as nondeterministic Büchi automata, one
needs to use a more expressive acceptance condition such as Muller,
Rabin, Streett, or parity conditions (see, e.g.,
\cite{Thomas90,VardiW07}).  The first determinization construction by
McNaughton \cite{mcnaughton1966testing} resulted in doubly-exponential
Muller automata, whereas the first asymptotically optimal construction
was presented by Safra \cite{safra1988complexity}, which yields Rabin
automata with $2^{\mc{O}(n\log n)}$ states. Since then, modifications
of Safra's construction have been proposed in order to improve the
constants in the exponent of the state complexity
\cite{piterman2006nondeterministic,schewe2009tighter}, resulting in a
construction for which tight lower bounds exist
\cite{ColcombetZ09,schewe2009tighter}. In particular, Piterman was the
first to present a direct construction of a deterministic parity
automaton (DPA) \cite{piterman2006nondeterministic}, which is described in a
slightly different way  in \cite{schewe2009tighter}. Another
variant of Safra's construction is presented in
\cite{redziejowski2012improved}.

While the Safra construction (and its variants) is the most famous
determinization construction for Büchi automata, there is another
approach which can be derived from a translation of alternating tree
automata into nondeterministic ones by Muller and Schupp
\cite{muller1995simulating}. Determinization constructions for Büchi
automata based on the ideas of Muller and Schupp have been described
in \cite{kahler2008complementation,fogarty2015profile,FismanL15}. The
two types of constructions, Safra and Muller/Schupp, are unified in
\cite{unidet}.

While in theory, constructions with tight upper and lower bounds have
been achieved, there is a lot of room for optimizations when
implementing determinization constructions. A first implementation of
Safra's construction in the version of
\cite{piterman2006nondeterministic} is \texttt{ltl2dstar}, presented in
\cite{klein2005diploma}. While \texttt{ltl2dstar} already implements
some optimizations and heuristics in order to reduce the size of the
output automaton, the resulting automata are still quite large, even
for small input Büchi automata.
The current state-of-the-art tool for determinization of Büchi
automata is part of the library \texttt{spot} \cite{duret2016spot},
which implements the variant of Safra's construction presented in
\cite{redziejowski2012improved}. Furthermore, \texttt{spot} also
implements more optimizations to reduce the size of the output
automata.

The contribution of this paper is the identification of new heuristics
for reducing the size of DPA produced
by determinization from Büchi automata and a simple framework for their implementation.
More specifically, our contributions can roughly be described as follows:
\begin{itemize}
\item We present a modularized version of the construction in \cite{unidet} which enables
  the integration of SCC-based heuristics.
\item By exploiting properties of the used construction,
  we make stronger use of language inclusions between states of
    the given NBA (e.g. obtained by simulation), permitting to use inclusions between states in the same SCC.
\item We treat specific types of SCCs in the NBA in a special way,
  namely those without accepting states, those with only accepting
  states, and those with a deterministic transition relation.
\item The construction from \cite{unidet} leaves some freedom in the
  choice of successor states. Our implementation admits different
  options for this successor choice, in particular those leading to
  the constructions of Safra and of Muller/Schupp.

  Furthermore, we have an optimization that exploits this freedom of
  successor selection by checking whether admissible successor states
  have already been constructed before adding a new state.
\item We make use of language equivalences of states in the
  constructed DPA in order to remove some SCCs of the resulting
  DPA. These language equivalences are derived during construction
  time from the powerset automaton for the given NBA.
\item We propose to use known minimization techniques as post
  processing, which to the best of our knowledge, have not yet been
  used in this context. We first minimize the number of priorities of
  the DPA using an algorithm from \cite{carton1999computing}, and then
  reduce the size of the resulting DPA by Hopcroft's algorithm
  \cite{hopcroft1971n}, treating it as a finite automaton that outputs
  the priorities.
\item
  We have evaluated the combination
  of different heuristics on different types of data-sets (randomly
  generated NBAs, NBAs constructed from random LTL formulas, NBAs
  generated from some families of LTL formulas taken from the
  literature, and NBAs obtained from specifications of the competition
  SYNTCOMP \cite{jacobs20174th}), and compared the size of the resulting automata with the
  ones produced by \texttt{spot}.

\end{itemize}

This work is organized as follows. After some preliminaries in
Section~\ref{sec:preliminaries}, we sketch the general construction on which we based our
implementation in Section~\ref{sec:construction}, and then describe our optimizations and
heuristics in Section~\ref{sec:heuristics}. In Section~\ref{sec:experiments} we present
our experiments and in Section~\ref{sec:conclusion} we conclude.
 \vspace{-3mm}
\section{Preliminaries}
\label{sec:preliminaries}
\vspace{-3mm}
First we briefly review basic definitions concerning $\omega$-automata and $\omega$-languages.
If $\Sigma$ is a finite alphabet, then $\Sigma^\omega$ is the set of all infinite
\emph{words} $w=w_0w_1\ldots$ with $w_i\in\Sigma$. For $w\in \Sigma^\omega$ we denote by
$w(i)$ the $i$-th symbol $w_i$. For convenience, we write $[n]$ for the set of natural
numbers $\{1,\ldots,n\}$.
A \emph{Büchi automaton} $\mc{A}$ is a tuple $(Q, \Sigma, \Delta, Q_0,
F)$, where $Q$ is a finite set of states, $\Sigma$ a finite alphabet, $\Delta \subseteq
Q\times \Sigma \times Q$ is the transition relation and $Q_0, F \subseteq Q$ are the sets
of initial and final states, respectively. When $Q$ is understood and $X\subseteq Q$, then
$\overline{X} := Q \setminus X$.
We write $\Delta(p,a) := \{q \in Q \mid (p,a,q) \in \Delta \}$ to denote the set of
\emph{successors} of $p \in Q$ on symbol $a \in \Sigma$, and $\Delta(P,a)$ for $\bigcup_{p\in
P}\Delta(p,a)$.
A \emph{run} of an automaton on a word $w\in \Sigma^\omega$ is an infinite sequence of
states $q_0, q_1, \ldots$ starting in some $q_0 \in Q_0$ such that $(q_i, w(i), q_{i+1})
\in \Delta$ for all $i\geq 0$.
As usual, an automaton is \emph{deterministic} if $|Q_0|=1$ and $|\Delta(p,a)|\leq 1$ for
all $p\in Q, a\in\Sigma$, and \emph{non-deterministic} otherwise. In this work, we assume
Büchi automata to be non-deterministic and refer to them as NBA.
A \emph{(transition-based) deterministic parity automaton} (DPA) is a deterministic
automaton $(Q,\Sigma,\Delta,Q_0,c)$ where instead of $F\subseteq Q$ there is a
\emph{priority function} $c : \Delta \to \mathbb{N}$ assigning a natural number to each
transition.

A run of an NBA is \emph{accepting} if it contains infinitely many accepting states.
A run of a DPA is accepting if the smallest priority of transitions along the
run which appears infinitely often is even.
An automaton $\mc{A}$ \emph{accepts} $w\in \Sigma^\omega$ if there
exists an accepting run on $w$, and the language $L(\mc{A}) \subseteq
\Sigma^\omega$ \emph{recognized} by $\mc{A}$ is the set of all
accepted words. If $P$ is a set of states of an automaton, we write
$L(P)$ for the language accepted by this automaton with initial state
set $P$. For sets consisting of one state $q$, we write $L(q)$ instead
of $L(\{q\})$.  We sometimes refer to states of a DPA that is obtained
by a determinization construction as \emph{macrostates} to distinguish
them from the states of the underlying Büchi automaton.

We write $p \overset{x}{\to} q$ if there exists a path from $p$ to $q$ labelled
by $x\in \Sigma^{+}$ and $p \to q$ if there exists some $x$ such that $p\overset{x}{\to} q$.
The \emph{strongly connected component (SCC)} of $p\in Q$ is $\scc(p) := \{q\in Q \mid p=q
\text{\ or\ } p \to q \text{\ and\ } q \to p \}$.
The set $\op{SCCs}(\mc{A}) := \{\scc(q) \mid q\in Q\}$ is the set of all SCCs and partitions $Q$.
A component $C\in\op{SCCs}(\mc{A})$ is \emph{trivial} if $C=\{q\}$ for
some $q\in Q$ and $q \not\to q$.  $C$ is \emph{bottom} if $p \to q$
implies $q \in C$ for all $p\in C$ and $q\in Q$. In a Büchi automaton,
$C$ is \emph{rejecting} if it is trivial or contains no accepting
states, and \emph{accepting} if it is not trivial and all cycles
inside $C$ contain an accepting state.
If an SCC is neither accepting or rejecting, it is \emph{mixed}.
Notice that rejecting and accepting components are often called \emph{inherently weak} in the
literature (e.g. \cite{boigelot2001use}).
Finally, $C$ is \emph{deterministic} if $|\Delta(p,a)\cap C|\leq 1$ for all $p\in C$ and $a \in \Sigma$.
An NBA is \emph{limit-deterministic}, if all its non-rejecting SCCs are deterministic and
cannot reach non-deterministic SCCs again.
 \vspace{-3mm}
\section{Construction}
\label{sec:construction}

\vspace{-2mm}
For our prototype implementation, we applied new optimizations and
heuristics to an adaptation of a recent generalized determinization
construction from NBA to DPA that was presented in \cite{unidet}.
This construction unifies the constructions of Safra and of Muller and
Schupp, and also introduces new degrees of freedom, which we
exploit in one of our heuristics (see Section~\ref{subsec:NBASCCs}).

Let $\mc{A} = (Q, \Sigma, \Delta, Q_0,
F)$ be the NBA to be determinized.
The macrostates $(\alpha, t)$ in the deterministic automaton (called \emph{ranked slices})
are tuples of disjoint non-empty sets $t := (S_1, S_2, \ldots, S_n)$ equipped with a
bijection $\alpha : [n] \to [n]$ that assigns to each set $S_i$ the
rank $\alpha(i)$. These ranks are used to define the priorities of the
transitions.
When reading symbol $a \in \Sigma$ in macrostate $(\alpha, t)$, the
successor $(\alpha', t')$ is obtained by applying the
successive operations $\op{step}$, $\op{prune}$, $\op{merge}$ and $\op{normalize}$.
An overview of the complete transition is sketched in
Figure~\ref{fig:constructionsketch}
and we refer to \cite{unidet} for more details.

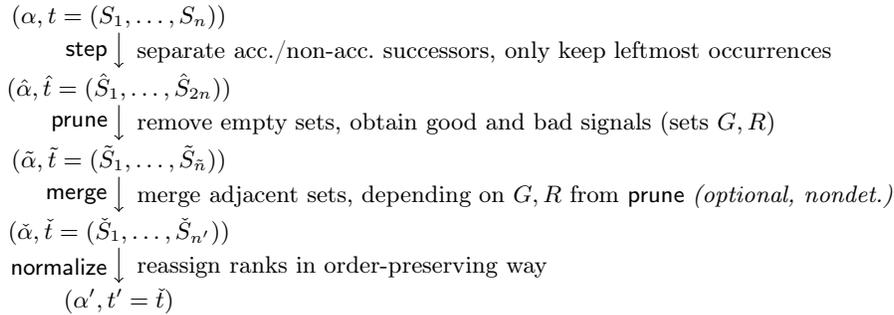
\begin{figure}[h]
  \begin{center}
    \begin{tikzpicture}[baseline={([yshift=-.5ex]current bounding box.center)},
      shorten >=1pt,node distance=5mm,inner sep=1pt,auto]
      \node              (t1) {$(\alpha, t=(S_1,\ldots, S_n))$};
      \node[below=of t1] (t2) {$(\hat{\alpha}, \hat{t}=(\hat{S}_1,\ldots, \hat{S}_{2n}))$};
      \node[below=of t2] (t3) {$(\tilde{\alpha}, \tilde{t}=(\tilde{S}_1,\ldots, \tilde{S}_{\tilde{n}}))$};
      \node[below=of t3] (t4) {$(\check{\alpha}, \check{t}=(\check{S}_1,\ldots, \check{S}_{n'}))$};
      \node[below=of t4] (t5) {$(\alpha', t'=\check{t})$};

      \path[->]
        (t1) edge node[left]{$\op{step}~$} node[xshift=2mm] {separate
        acc./non-acc. successors, only keep leftmost occurrences} (t2)
        (t2) edge node[left]{$\op{prune}~$} node[xshift=2mm]{remove empty sets, obtain good and bad signals (sets $G, R$)} (t3)
        (t3) edge node[left]{$\op{merge}~$} node[xshift=2mm] {merge adjacent sets,
        depending on $G,R$ from $\op{prune}$ \emph{(optional, nondet.)}} (t4)
        (t4) edge node[left]{$\op{normalize}~$} node[xshift=2mm] {reassign ranks in order-preserving way} (t5)
          ;
    \end{tikzpicture}
    \vspace{-7mm}
  \end{center}
  \caption{Sketch of some transition $(\alpha,t)\overset{a\in\Sigma}{\rightarrow}(\alpha',t')$
  of the determinization construction.
  The nondeterminism in the optional $\op{merge}$ operation enables the simulation of
  different constructions and provides freedom for optimizations.}
  \label{fig:constructionsketch}
\end{figure}

\vspace{-3mm}
\subsection{A modular variant of the construction}
\label{sec:modularsketch}

\vspace{-1mm}
We present a generalization of the construction above which
offers a clean framework to implement our heuristics, specifically the heuristics that
exploit the SCC structure of the Büchi automaton. This variant of the construction
runs multiple interacting instances of the construction above in parallel, where each
instance is essentially a ranked slice as above, but manages only an
SCC of the Büchi automaton (or a union of SCCs).
A macrostate in this modular construction consists of not one, but (in general) multiple
tuples with a global ranking function (see illustration in
Figure~\ref{fig:modular}).
Furthermore,
for some heuristics it is not needed to have a rank assigned to certain
states of the Büchi automaton.  We therefore additionally store a
separate ``buffer set'' in each macrostate with no assigned rank where such Büchi states
can be placed in. The states in the buffer set are not used to track an accepting run
directly, but they may reach successors later that should be managed inside of a tuple.

\vspace{-6mm}
\begin{figure}[h]
  \[
    (\{q_1,q_3\}^5,\{q_2\}^3) \mid (\{q_4\}^1, \{q_6\}^6, \{q_5\}^4) \mid (\{q_7,q_8,q_9\}^2) \mid\mid \{q_{10}\}
  \]
\vspace{-6mm}
  \caption{Illustration of some macrostate of the modularized determinization procedure.
  In this example, assume that the underlying automaton has components with the states
  $q_1,q_2,q_3$ and $q_4,q_5,q_6$ and $q_7,\ldots,q_{10}$, respectively. The superscripts denote the global
  rank of the corresponding set. State $q_{10}$ is currently not considered to track
  accepting runs, so it is stored in the separated buffer set and has no assigned rank.
  The vertical bars separate the different components from each other.
  }
  \label{fig:modular}
\end{figure}

\vspace{-5mm}
To ensure the correct interaction of the tuples,
the ranks assigned to sets must be unique in the whole macrostate.
Apart from this difference, the
construction above is applied to each tuple individually. Whenever a
transition in the Büchi automaton moves from an SCC $C_1$ to a
different SCC $C_2$,
and the same state is not also reached by some transition inside of $C_2$,
then the resulting
state is added in a new right-most set in the tuple responsible for
SCC $C_2$. The rank of this new set is the least important (i.e., largest in the tuple) one.

The correctness of this modular approach essentially follows from the fact that each accepting run
eventually stays in the same SCC of the Büchi automaton forever, i.e., the states along
this run eventually stay in the same tuple forever and are managed by the construction above
without interference.
\withappendix{
More details about the required adaptations can be found in Appendix~\ref{app:modular}.
}
 \vspace{-4mm}
\section{Optimizations and Heuristics}
\label{sec:heuristics}

\vspace{-3mm}
In the following, we describe the optimizations and heuristics we suggest to consider
during determinization of Büchi automata.

\vspace{-3mm}
\subsection{Using known language inclusions of Büchi states}
\label{subsec:langinc}

\vspace{-2mm}

We are aware of only two rather simple optimizations that are exploiting known language inclusion
relations between states of the Büchi automaton---a ``true-loop'' optimization from
\texttt{ltl2dstar} \cite{klein2005diploma} that syntactically identifies NBA states that can be treated as
accepting sinks, and an optimization used in \texttt{spot}
\cite{duret2016spot} that is restricted to pairs of states from different SCCs of the NBA, which we refer to as ``external'' language inclusions.
In general, it is nontrivial to use the language inclusions between pairs of states in the same SCC of the NBA, because
both of them can occur infinitely often in the same run.

However, there is a possibility to use the language inclusion relation more generally
within the same SCC in a safe manner by exploiting properties of the determinization
construction from \cite{unidet}. The result is what we call the ``internal'' language
inclusion optimization, which works for the original non-modular version of the
construction, which uses a single ranked tuple $(\alpha,t)$ per macrostate, as follows.
Let $Q_t := \bigcup_{i=1}^{n} S_{i}$ for $t=(S_1,\ldots,S_n)$ and let
the function $\op{idx}_t : Q_{t} \to [n]$ map each state $q\in Q_{t}$ to the
tuple index $i$ such that $q \in S_{i}$.

\vspace{-1mm}
\begin{restatable}{theorem}{thmlanginclusions}
  \label{thm:inclusions}
  Let $p,q \in Q$ with $L(p) \subseteq L(q)$
  and let $(\alpha, t)$ be a macrostate of the determinization construction
  such that $p,q \in Q_t$.
  If $q$ is to the left of $p$ (i.e. $\op{idx}_t(q) < \op{idx}_t(p)$),
  then $p$ may be omitted from the macrostate without changing the language of the determinized automaton.
\end{restatable}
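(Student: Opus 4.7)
The approach is to show that omitting $p$ from the macrostate preserves the language of the determinized automaton by leveraging the semantic observation that, since $L(p) \subseteq L(q)$, every accepting NBA-run starting at $p$ on some suffix word $w'$ has a counterpart accepting NBA-run starting at $q$ on the same suffix. Since $q$ lies to the left of $p$ in the tuple, $q$'s thread is ``senior'' to $p$'s in the stepping and can absorb the acceptance-witnessing role that $p$'s thread would have played. The claim then follows by proving both completeness (omitting $p$ loses no accepted word) and soundness (omitting $p$ adds no accepted word). This argument is a semantic generalization of the existing ``leftmost-only occurrences'' rule inside the $\op{step}$ operation of the construction, which already handles the special case $p = q$ by identity.

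For completeness, I would assume the NBA accepts $w$ via some run $\rho$. By the correctness of the construction from \cite{unidet}, the original DPA accepts $w$, witnessed by $\rho$'s trajectory through the sets of successive macrostates eventually stabilizing at some rank. At any step where $\rho$'s current NBA-state equals the copy of $p$ that gets removed, I replace the suffix of $\rho$ from that point onward by an accepting NBA-run starting at $q$ on the remaining word, which exists by $L(p) \subseteq L(q)$; this new suffix is tracked by $q$'s thread in the modified DPA. A careful bookkeeping argument then shows that the priority signal produced along $q$'s thread is at least as favorable as the one $\rho$'s thread through $p$ would have produced, because $q$'s set lies leftmost and so its rank subsumes $p$'s in priority computations. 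For soundness, the modified macrostate tracks a subset of the NBA-states tracked in the original, so every infinite trajectory in the modified DPA is also a valid NBA-trajectory; combined with the soundness argument of \cite{unidet}, this yields a genuine accepting NBA-run for every word accepted by the modified DPA.

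The main obstacle is the interplay with the rank and priority mechanism: the $\op{prune}$ operation emits good/bad signals based on set emptiness and accepting-state occurrences, so omitting $p$ may cause $S_{\op{idx}_t(p)}$ to become empty (or empty earlier in future steps), emitting a different rank as a bad signal, after which $\op{normalize}$ reassigns ranks and the trajectories of the original and modified DPAs diverge structurally. Establishing a step-by-step simulation that preserves the $\liminf$ of priorities between the two constructions is the most delicate part of the argument, and it requires carefully tracking how $q$'s thread takes over the role of $p$'s thread in every subsequent macrostate. The correctness proof template of \cite{unidet} for the $p=q$ case should be adaptable by systematically replacing identity-based reasoning with language-inclusion-based reasoning.
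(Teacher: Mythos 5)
There is a genuine gap in the completeness direction. Your plan is to take an arbitrary accepting run $\rho$ and, whenever it reaches an occurrence of $p$ that the optimization deletes, splice in a new accepting suffix starting from $q$ (which exists since $L(p)\subseteq L(q)$). But $p$ and $q$ may lie in the same SCC of the NBA --- this is exactly the case the theorem is meant to cover --- so the rerouted run can hit a deleted occurrence of $p$ (or of another removable state) again, and the surgery may have to be performed infinitely often. An infinite sequence of suffix replacements does not converge to a single infinite run in any obvious way, and even if one extracts a limit run by a compactness argument, there is no guarantee it is accepting: each splice can push the next visit to $F$ further out. This is precisely the known obstruction to using language inclusion inside a single SCC, and the ``careful bookkeeping argument'' you defer to is the step that fails; a per-step comparison of ranks does not control the $\liminf$ of priorities across infinitely many reroutings.

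The paper avoids surgery altogether. It fixes, once and for all, an accepting run $\rho$ on $w$ with \emph{lexicographically maximal acceptance profile} (visiting $F$ as early as possible; such a run exists by a greedy/K\"onig-type argument). It then uses the invariant that the construction places states of lexicographically larger runs to the left in the tuple, so $\op{idx}_t(q) < \op{idx}_t(p)$ means the best run through $q$ is lexicographically better than the best run through $p$. Consequently no state of $\rho$ is ever deleted by the optimization --- a deletion would, via $L(p)\subseteq L(q)$, produce an accepting run with a strictly larger profile, contradicting maximality --- and the unmodified correctness lemma of the base construction can be applied to $\rho$ directly. Your soundness argument (the modified DPA tracks a subset of runs) is fine, but to repair completeness you need to replace the iterated-replacement idea with this choice of a canonical maximal run, or supply some other device that bounds the number of replacements.
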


\withappendix{
A sketch of the correctness proof and required modifications for the modularized
construction can be found in Appendix~\ref{app:inclusions}.
}

Since the internal or external language inclusion optimization change
the structure of the macrostates, the overall structure of the
constructed DPA might change significantly.
For this reason, there is no guarantee that this technique cannot increase the number
of states in some cases.
In our experiments we found no instance where this is the case, i.e., in practice they only
decrease the number of states.

\vspace{-3mm}
\subsection{Using properties of SCCs in the NBA}
\label{subsec:NBASCCs}

In the following, we describe some heuristics to simplify the
treatment of SCCs in the NBA that are rejecting, accepting, or
deterministic.
Earlier, we described how the determinization construction can be performed in a modular
way. The following heuristics can be implemented very cleanly in that framework, but in
principle could also be used with other constructions.

\vspace{-3mm}
\paragraph{Rejecting SCCs:}

It is known that one can keep states from rejecting NBA SCCs separate from the
determinization construction, as no accepting run can visit them infinitely often.
This can be realized in our modular construction by keeping states of
rejecting SCCs outside of the determinization tuples in the buffer set.
A related optimization is already implemented in \texttt{spot} as a modification of the
construction from \cite{redziejowski2012improved}.

\vspace{-2mm}
\paragraph{Accepting SCCs:}
For accepting SCCs, it is sufficient to check that at least one run eventually stays in the
same SCC forever. For this, an adaptation of the Miyano-Hayashi
construction \cite{miyano1984alternating} (often called ``breakpoint''-construction) can
be used, which requires to manage only two different sets---a track
set and a background set. In a transition, the track set is updated to
all successors of the current track set that are in an accepting
SCC. The background set contains all other states from accepting
SCCs that are reached in this transition.
This pair of sets has one rank assigned by the global ranking
function. As long as the track set is non-empty, the rank signals a
good event. If the track set becomes empty in a transition, then the rank
signals a bad event, the background set becomes the new
track set, and a new rank is assigned that is larger than all the
ranks that survived the last transition.
It is not very difficult to see that this construction correctly
detects runs of the NBA that remain inside an accepting SCC.

This heuristic is realized in our modular construction by delaying the movement of states into the
corresponding component tuple until a ``breakpoint'' happens, thereby ensuring that the
tuple responsible for all accepting components always contains at most one non-empty set.
If the two heuristics for rejecting and accepting SCCs are used, and
the input NBA is a weak automaton (in which all SCCs are
either rejecting or accepting), then one obtains the
pure breakpoint construction, which is used, e.g., in
\cite{boigelot2001use} to determinize weak automata. The overall state
complexity is then reduced from $2^{\mc{O}(n\log n)}$ to $3^{n}$.
\withappendix{
More details can be found in Appendix~\ref{app:accsccs}.
}

\vspace{-2mm}
\paragraph{Deterministic SCCs:}

If an SCC has both accepting and rejecting states, but is deterministic, a run never
branches into multiple runs as long as the successors stay in the same component. Hence, the
number of different runs can only decrease or stay the same. This excludes the possibility
that an accepting state is visited infinitely often by different runs, but not by a single
infinite run. Therefore, whenever a set of states in a deterministic SCC is reached from
some other SCC, it suffices to add the states to the tuple which is responsible for this
SCC with a new rank, but in the following steps there is no need to refine this set, i.e.
separate accepting from non-accepting states, as this is only required for
filtering out infinite non-accepting runs. For a good event to be signalled by such a
component in the construction, it suffices that a set just contains an accepting state. If
this set eventually never becomes empty and infinitely often contains accepting states,
clearly at least one of the finitely many runs evolving in this set must visit accepting
states infinitely often. Applying this heuristic on limit-deterministic NBA simplifies
the determinization procedure to a variant of the construction described in
\cite{esparza2017ltl}, because then the importance ordering given by ranks coincides with
the tuple index ordering, effectively mimicking the tuples as used in \cite{esparza2017ltl}.
\withappendix{
More details can be found in Appendix~\ref{app:detsccs}.
}

It should be pointed out that all these SCC-based heuristics can in fact increase the
number of states for specific instances,
even though they restrict the state space in general
\withappendix{
  (e.g. see Appendix~\ref{app:experiments}, Figure~\ref{fig:badforheu})
}
and therefore must be applied with greater care.

\vspace{-3mm}
\subsection{Smart successor selection}
\label{subsec:smart}

The determinization construction in \cite{unidet} permits, in general, multiple
valid successor macrostates because of the freedom in the $\op{merge}$ operation.
Determinization in practice usually works by starting in some initial state and fully exploring the
state space which is reachable using the transitions prescribed by some construction.
The natural optimization to be derived from the non-determinism of
$\op{merge}$ is to check for each transition whether a permissible
successor state has already been constructed. In this case, just a new edge is added, and
a new state is constructed only if no viable existing successor could be found.

The question is then how to find such a state as efficiently as
possible. A naive approach would be to check for every already visited
state whether it satisfies the constraints on the shape of a successor
for the current transition. This would incur a blow-up in the running
time that is quadratic in the size of the output automaton, which can
be very large in general.  A better approach is to structure the set
of already visited states in a reasonable way that accelerates this
search. We achieve this by managing a trie where each node corresponds
to a macrostate. Each trie node is labelled by a set $S \subseteq Q$
of states of the NBA and can be marked to denote whether the
corresponding macrostate already exists in the DPA.  There exists a
simple bijection that maps each ranked slice into a sequence of sets
(without ranks) which uniquely determines a node in the trie. The
resulting sequence of sets $S_1,S_2,\ldots,S_n$ is the ``word'' over
the alphabet $2^Q$ which is inserted into the trie if the
corresponding macrostate is added to the DPA (the sequence
$S_1,S_2,\ldots,S_n$ is different from the tuple in the ranked slice).

When constructing a successor state, we first apply the operations
$\op{step}$ and $\op{prune}$. From that, we obtain the minimal active
rank $k$.  Under the constraints of the construction sketched in
Section~\ref{sec:construction}, the $\op{merge}$ operation can now
merge sets with rank at least $k$  which in turn means that the sets
with rank smaller than $k$ do not change during $\op{merge}$. This sequence
of sets of rank smaller than $k$ determines a trie node such that all
possible successors that can be constructed by $\op{merge}$ are below this node.

Only a simple check must
be performed on remaining candidate macrostates that are found in the trie, and all
required steps can be carried out efficiently using bit-set operations,
so that the overall slowdown incurred by this optimized successor selection is very
moderate.
\withappendix{
More details about this optimization can be found in Appendix~\ref{app:smartsucc}.
}

This optimization only prevents new redundant states from being
constructed, thus the resulting automaton can never become larger than
without this optimization enabled. However, in combination with other
heuristics, like the post-processing described in
Section~\ref{subsec:minimize}, it might also have a negative effect.

\vspace{-4mm}
\subsection{The benefits of the powerset structure}
\label{subsec:topological}

\vspace{-2mm}
It is a well-known fact that language equivalent states of a
deterministic or non-deterministic $\omega$-automaton cannot be
merged, in general, without changing the accepted language. However,
if the two states are in different SCCs, it is possible to remove one
of them. The aim of the heuristic that we describe next, is to exploit
this fact during construction of the DPA. Formally, we rely on the
following proposition, which we consider folklore.
\begin{proposition}
  \label{prop:keepdeep}
  Let $s,s'$ be states of a DPA. If $L(s) = L(s')$ and $s
  \not\rightarrow s'$, then all incoming edges of $s'$ can be redirected
  to $s$ without changing the recognized language (and $s'$ can be removed, since
  it becomes unreachable).
\end{proposition}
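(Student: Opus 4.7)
The plan is to argue that for every word $w \in \Sigma^\omega$, the unique run of the modified DPA on $w$ is accepting iff the unique run of the original DPA on $w$ is accepting. Let $\mc{A}$ denote the original DPA and $\mc{A}'$ the DPA obtained after redirecting every transition into $s'$ to lead into $s$ instead (and setting the initial state to $s$ if it was $s'$). Because $s'$ has no incoming transitions in $\mc{A}'$ and is not initial there, it is unreachable, so removing it does not affect the language. Fix $w$, and let $\rho = q_0 q_1 q_2 \ldots$ be the run of $\mc{A}$ on $w$ from its initial state, and $\rho' = q'_0 q'_1 q'_2 \ldots$ the run of $\mc{A}'$ on $w$.

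First I would handle the easy case in which $\rho$ never visits $s'$. Then every transition used along $\rho$ exists unchanged in $\mc{A}'$ (redirection only touches transitions whose target was $s'$), so $\rho' = \rho$ and the acceptance status is identical. Second, suppose $s'$ occurs along $\rho$, and let $i$ be the smallest index with $q_i = s'$ (with $i = 0$ if $s'$ was the initial state). By minimality of $i$, the prefix $q_0 \ldots q_{i-1}$ uses no redirected transition, so $\rho'$ agrees with $\rho$ on this prefix and then $q'_i = s$ instead of $s'$. From position $i$ onward, $\rho'$ continues as the unique run of $\mc{A}'$ starting in $s$ on the suffix $w_i w_{i+1} \ldots$. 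Since $s$ is not the target of any redirected transition and, by the hypothesis $s \not\to s'$, no state reachable from $s$ in $\mc{A}$ equals $s'$, none of the transitions taken from $s$ onward have ever been modified. Hence the suffix $q'_i q'_{i+1} \ldots$ of $\rho'$ coincides with the unique run of $\mc{A}$ starting in $s$ on $w_i w_{i+1} \ldots$, while the suffix $q_i q_{i+1} \ldots$ of $\rho$ is the unique run of $\mc{A}$ starting in $s'$ on the same suffix.

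Now I would invoke $L(s) = L(s')$: the runs from $s$ and from $s'$ on $w_i w_{i+1} \ldots$ are both accepting or both rejecting, and since the parity acceptance condition depends only on infinitely often occurring priorities, it is determined by any suffix of the run. Therefore $\rho'$ is accepting iff its suffix from $q'_i$ is accepting iff the analogous suffix of $\rho$ is accepting iff $\rho$ is accepting. Combining both cases, $L(\mc{A}') = L(\mc{A})$, which is the claim.

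The only subtle step is the second one, where we must be sure that after the single ``deviation'' at position $i$ the modified run never sees $s'$ again, so that the tail is really a genuine run of $\mc{A}$ from $s$. This is exactly what the assumption $s \not\to s'$ buys us; without it, the modified run might re-enter the redirected region, and a repeated interleaving of visits to $s'$ (which in the original could occur infinitely often with low priority) would make the priority bookkeeping along $\rho'$ incomparable with that of $\rho$, and language equivalence of $s$ and $s'$ would not suffice.
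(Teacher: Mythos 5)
Your proof is correct. The paper states this proposition as folklore and gives no proof of its own, so there is nothing to compare against; your argument --- splitting on the first visit to $s'$, using $s \not\to s'$ to ensure the tail from $s$ traverses no modified transition, and invoking the fact that parity acceptance is a tail property --- is exactly the standard argument one would expect, and it also correctly handles the edge case where $s'$ is initial. One clause is slightly misstated: $s$ \emph{is} the target of the redirected transitions in the modified automaton; what matters, and what you in fact use, is that no transition \emph{leaving} a state reachable from $s$ has been modified, since only transitions whose target was $s'$ are touched.
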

This implies that the whole SCC containing such a state $s'$ can be
removed because the language equivalence holds for all
the successor states of $s$ and $s'$ as well, due to the deterministic
transitions.

In order to use this insight in the determinization construction, we
need to detect language equivalences of states of the DPA at
construction time. We do this by using the \emph{powerset structure}
$\op{PS}(\mc{A})$ of the given NBA $\mc{A}$,
which is the transition system with nodes from $2^Q$ obtained by using the NFA powerset construction
on the Büchi automaton, with $Q_0$ as initial state and a deterministic transition
function $\delta^{\op{PS}(\mc{A})}(P,a) := \Delta(P,a)$. Our
optimization is based on the following simple observation.
\begin{proposition}\label{prop:powerset-language}
Let $\mc{B}$ be a DPA that is equivalent to the NBA $\mc{A}$. Let $u$
be a finite word such that in $\mc{B}$ the state $s$ is reached via
$u$, and in $\op{PS}(\mc{A})$ the set $P$ is reached via $u$. Then
  $\mc{B}$ accepts from $s$ the same language as $\mc{A}$ accepts from $P$.
\end{proposition}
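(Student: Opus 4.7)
The plan is to prove the two language equalities $L_{\mc{B}}(s) = L(\mc{B})/u$ and $L_{\mc{A}}(P) = L(\mc{A})/u$ (where $L/u := \{v \in \Sigma^\omega \mid uv \in L\}$ denotes the left quotient), and then invoke $L(\mc{A}) = L(\mc{B})$ to chain them together. Concretely, I would fix an arbitrary $v \in \Sigma^\omega$ and show $v \in L_{\mc{B}}(s) \Longleftrightarrow uv \in L(\mc{B}) \Longleftrightarrow uv \in L(\mc{A}) \Longleftrightarrow v \in L_{\mc{A}}(P)$.

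For the first equivalence, I would use that $\mc{B}$ is a DPA: determinism guarantees that every run of $\mc{B}$ on $uv$ must pass through $s$ after reading $u$, so the runs of $\mc{B}$ on $uv$ from the initial state are in bijection (by concatenation) with the runs of $\mc{B}$ on $v$ from $s$. Since the priority condition depends only on the tail of the run, acceptance is preserved under this bijection; the finite prefix reading $u$ contributes only finitely many transitions and cannot affect the infimum-of-priorities-seen-infinitely-often.

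For the third equivalence, I would appeal to the definition of the powerset structure: by induction on $|u|$, $P = \Delta(Q_0,u)$ is exactly the set of states of $\mc{A}$ reachable from some initial state by some run on $u$. Hence the runs of $\mc{A}$ on $uv$ starting in $Q_0$ split at position $|u|$ into a prefix ending in some $q \in P$ followed by a run on $v$ from $q$, so $\mc{A}$ has an accepting run on $uv$ iff some $q \in P$ admits an accepting run on $v$, i.e.\ $v \in L_{\mc{A}}(P)$. The middle equivalence is immediate from the hypothesis $L(\mc{A}) = L(\mc{B})$.

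There is no real obstacle here; the only subtle point is being careful that acceptance of an $\omega$-run is tail-determined (both for Büchi and for parity), which is what legitimizes treating $u$ as a ``reset'' and passing freely between runs from the initial state and runs from $s$ or from $P$. The proof is short and essentially a bookkeeping argument about determinism on the DPA side and the standard subset-construction invariant on the NBA side.
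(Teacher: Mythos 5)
Your proof is correct: the paper states this proposition without proof, calling it a ``simple observation,'' and your left-quotient argument (determinism forcing the unique run of $\mc{B}$ on $uv$ through $s$, the subset-construction invariant $P=\Delta(Q_0,u)$ on the NBA side, and tail-determinedness of both acceptance conditions) is exactly the standard justification the authors implicitly rely on. No gaps.
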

This implies that states of the DPA that can be reached simultaneously
with the same set $P$ in $\op{PS}(\mc{A})$ are language equivalent.
In combination with Proposition~\ref{prop:keepdeep}, we obtain that a
DPA equivalent to $\mc{A}$ only needs one SCC per SCC in
$\op{PS}(\mc{A})$.
\begin{corollary}
  \label{cor:topo}
Let $\mc{A}$ be an NBA and $\op{PS}(\mc{A})$ its powerset structure. Then there exists
a DPA $\mc{B}$ recognizing the same language such that for each SCC of $\op{PS}(\mc{A})$
there is at most one SCC in $\mc{B}$.
\end{corollary}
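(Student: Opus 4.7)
The plan is to invoke both propositions in tandem by pairing an arbitrary equivalent DPA with $\op{PS}(\mc{A})$ and then greedily reducing.

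First, I would fix any DPA $\mc{B}_0$ equivalent to $\mc{A}$ (which exists by any determinization construction for NBAs) and form the synchronous product $\mc{B}' := \mc{B}_0 \times \op{PS}(\mc{A})$, inheriting the priority function from $\mc{B}_0$. Since $\op{PS}(\mc{A})$ is deterministic and does not affect acceptance, $\mc{B}'$ is itself a DPA equivalent to $\mc{A}$, and every reachable state has the form $(s, P)$ with $P = \Delta(Q_0, u)$ for each word $u$ leading to it. Proposition~\ref{prop:powerset-language} applied to $\mc{B}'$ then yields $L_{\mc{B}'}((s, P)) = L_\mc{A}(P)$ for every reachable state, so any two reachable states of $\mc{B}'$ with identical $P$-component are language equivalent.

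Second, I would greedily apply Proposition~\ref{prop:keepdeep}: as long as there exist distinct reachable states $(s_1, P)$ and $(s_2, P)$ of the current automaton with the same $P$-component and with $(s_1, P) \not\to (s_2, P)$, redirect all incoming edges of $(s_2, P)$ to $(s_1, P)$, making $(s_2, P)$ unreachable and hence discardable. Since each step strictly decreases the number of reachable states, the process terminates; let $\mc{B}$ be the resulting DPA. The termination condition ensures that any two distinct remaining states of $\mc{B}$ with identical $P$-component are mutually reachable, hence lie in a common SCC of $\mc{B}$.

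Third, I would transfer this local property to the SCC structure. Given any SCC $\mc{D}$ of $\op{PS}(\mc{A})$ and any two reachable states $(s_1, P_1), (s_2, P_2)$ of $\mc{B}$ with $P_1, P_2 \in \mc{D}$, pick a word $u$ with $P_1 \overset{u}{\to} P_2$ in $\op{PS}(\mc{A})$ (which exists since $\mc{D}$ is an SCC); then $(s_1, P_1) \overset{u}{\to} (s_1', P_2)$ for some state $(s_1', P_2)$ of $\mc{B}$, which by the previous paragraph lies in the same SCC as $(s_2, P_2)$. A symmetric argument for the reverse direction shows $(s_1, P_1)$ and $(s_2, P_2)$ lie in a common SCC of $\mc{B}$. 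Since the projection $(s, P) \mapsto P$ is a graph morphism from $\mc{B}$ into $\op{PS}(\mc{A})$, every SCC of $\mc{B}$ maps into a single SCC of $\op{PS}(\mc{A})$, and the above shows this map is injective, which is exactly the claim.

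The main obstacle is verifying that the iterative reduction is sound: redirecting edges alters the reachability graph, but each redirection preserves the recognized language (by Proposition~\ref{prop:keepdeep}) and strictly shrinks the state set, so termination and correctness are clear. A subtlety worth checking is that the argument really produces one SCC per SCC of $\op{PS}(\mc{A})$ rather than only collapsing same-$P$ classes; this is exactly what the third paragraph handles, using that a word witnessing a transition in $\op{PS}(\mc{A})$ simultaneously induces a transition in $\mc{B}$.
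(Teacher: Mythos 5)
Your argument is correct and takes essentially the same route as the paper, namely combining Proposition~\ref{prop:powerset-language} (states reached together with the same powerset node are language equivalent) with Proposition~\ref{prop:keepdeep} (equivalent states not sharing an SCC can be merged); the paper states this combination in a single sentence, and your product-plus-greedy-quotient argument, including the final step lifting mutual reachability in $\op{PS}(\mc{A})$ to the reduced DPA, correctly supplies the missing details.
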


Based on this observation, we separately construct for each SCC $C$ of $\op{PS}(\mc{A})$ one SCC of the DPA, as explained in the following.
The construction picks some
node $S \in C$ (which is a set of NBA states), and starts the determinization
procedure using $S$ as initial states. Furthermore, it tracks for each macrostate $s$ also the
corresponding node $P_s$ in $C$. On each transition, the successor macrostate for the DPA is only constructed, if the corresponding transition in $\op{PS}(\mc{A})$ stays in $C$.

This construction gives us a  DPA $\mc{B}_C$ that can be partial in the case of non-bottom SCCs $C$ of $\op{PS}(\mc{A})$, with
``holes'' for transitions that exit $C$.
Note that $\mc{B}_C$ might consist of several SCCs itself. Based on Propositions~\ref{prop:keepdeep} and~\ref{prop:powerset-language}, we only need to keep one bottom SCC of $\mc{B}_C$.

In order to complete the missing transitions, consider such a transition leading outside of $C$ to some $P'$ in an SCC $C'$ of $\op{PS}(\mc{A})$. We assume that we have already done the determinization for $C'$ (starting at the bottom SCCs of $\op{PS}(\mc{A})$ and then going backwards), so there already exists a macrostate that corresponds to $P'$, and we can let the transition of the DPA point to that macrostate. This idea is illustrated in Figure~\ref{fig:toposketch}.

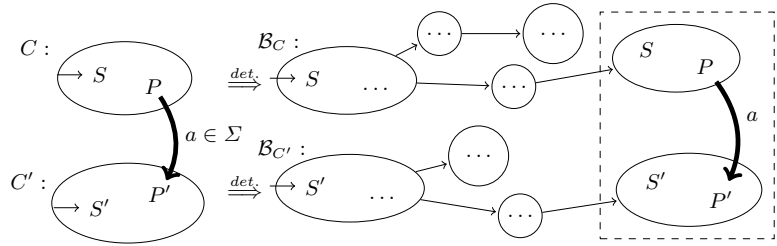
\begin{figure}[h]
  \centering
  \scalebox{0.9}{
    \begin{tikzpicture}[baseline={([yshift=-.5ex]current bounding box.center)},
      shorten >=1pt,node distance=1mm and 4mm,auto]
      \node[initial,initial text=] (s) {$S$};
      \node[below right=of s,yshift=3mm,inner sep=1pt] (t) {$P$};
      \node[shape=ellipse,draw,minimum size=1cm,fit={(s) (t)}] (ps1) {};
      \node[left=of s,yshift=4mm] (lab1) {$C:$};

      \node[initial,initial text=,below=1.5cm of s] (s2) {$S'$};
      \node[above right=of s2,yshift=-3mm,inner sep=1pt] (t2) {$P'$};
      \node[shape=ellipse,draw,minimum size=1cm,fit={(s2) (t2)}] (ps2) {};
      \node[left=of s2,yshift=4mm] (lab2) {$C':$};

      \node[right=of ps1] (arr1) {$\overset{det.}{\Longrightarrow}$};
      \node[above right=of arr1,yshift=-1mm,xshift=-7mm] (lab3) {$\mc{B}_C:$};
      \node[below=1cm of arr1] (arr2) {$\overset{det.}{\Longrightarrow}$};
      \node[above right=of arr2,yshift=-1mm,xshift=-7mm] (lab4) {$\mc{B}_{C'}:$};

      \node[initial,initial text=,right=of arr1] (ds1) {$S$};
      \node[below right=of ds1,yshift=3mm] (tmp1) {$\ldots$};
      \node[shape=ellipse,draw,minimum size=1cm,inner sep=0pt,fit={(ds1) (tmp1)}] (detps1) {};
      \node[state,below right=-0.5cm and 1.5cm of detps1] (detps2) {$\ldots$};
      \node[state,above right=of detps1] (detps4) {$\ldots$};
      \node[state,right=of detps4,inner sep=2mm,xshift=5mm] (detps5) {$\ldots$};
      \node[below right=-1cm and 1.5cm of detps2] (ds2) {$S$};
      \node[below right=of ds2,yshift=3mm] (dt1) {$P$};
      \node[shape=ellipse,draw,minimum size=1cm,inner sep=0pt,fit={(ds2) (dt1)}] (detps3) {};

      \node[initial,initial text=,right=of arr2] (ds21) {$S'$};
      \node[below right=of ds21,yshift=3mm] (tmp21) {$\ldots$};
      \node[shape=ellipse,draw,minimum size=1cm,inner sep=0pt,fit={(ds21) (tmp21)}] (detps21) {};
      \node[state,above right=-1cm and 1.5cm of detps21] (detps22) {$\ldots$};
      \node[state,above right=of detps21,yshift=-3mm,xshift=4mm,inner sep=2mm] (detps24) {$\ldots$};
      \node[below right=-1cm and 1.5cm of detps22] (ds22) {$S'$};
      \node[below right=of ds22,yshift=3mm] (dt2) {$P'$};
      \node[shape=ellipse,draw,minimum size=1cm,inner sep=0pt,fit={(ds22) (dt2)}] (detps23) {};

      \node[shape=rectangle,draw,dashed,minimum size=2cm,inner sep=5pt,fit={(detps3) (detps23)}] (keep) {};

      \draw[->] (t) edge[bend left,line width=2pt] node[yshift=-2mm] {$a\in\Sigma$} (t2);

      \draw[->] (detps1) edge (detps2);
      \draw[->] (detps2) edge (detps3);
      \draw[->] (detps1) edge (detps4);
      \draw[->] (detps4) edge (detps5);

      \draw[->] (detps21) edge (detps22);
      \draw[->] (detps22) edge (detps23);
      \draw[->] (detps21) edge (detps24);

      \draw[->] (dt1) edge[bend left,line width=2pt] node[yshift=0mm] {$a$} (dt2);
      ;
    \end{tikzpicture}}
  \caption{Abstract sketch of the determinization guided by SCCs of $\op{PS}(\mc{A})$.
  The two powerset SCCs $C$ and $C'$ were determinized starting with the sets $S$ and $S'$,
  respectively. The states in the constructed partial DPAs $\mc{B}_C$ and $\mc{B}_{C'}$
  are depicted by the sets from $\op{PS}(\mc{A})$ to which they are language-equivalent.
  It suffices to keep just one bottom SCC of each partial DPA (e.g. in the dotted
  rectangle) and connect them by exploiting the known language equivalences to introduce
  missing edges (e.g. the depicted bold edge).
  }
  \label{fig:toposketch}
\end{figure}

Putting this into practice and keeping only bottom SCCs of the partial DPAs that have
the smallest size, is what we call the ``topological'' optimization. While this still requires
the exploration of all macrostates that are reached by the determinization of a single SCC
in $\op{PS}(A)$, many of those may be removed afterwards.

Alternatively, one could explore
in a depth-first fashion and greedily just keep the first completed bottom SCC,
effectively trading an even smaller automaton size for possibly faster computation.

It can be easily implemented in such a way that the resulting automaton will have at most
the same size as without this optimization, by picking appropriate initial states for the
construction of the partial DPAs $\mc{B}_C$ (i.e. picking initial states that would also
be reached by the unoptimized construction anyway).

Notice that this optimization is generic and can be used with any determinization
construction based on state exploration.

\subsection{State reduction using Mealy minimization}
\label{subsec:minimize}

A simple but powerful optimization that to our knowledge has not been considered yet,
consists of first minimizing the number of priorities of the DPA (which can be done
efficiently using a simple algorithm from \cite{carton1999computing}) and then applying
the classical minimization algorithm by Hopcroft \cite{hopcroft1971n}, by interpreting the
DPA as a Mealy machine that outputs priorities. Since the minimization as Mealy machine
preserves the priority sequence for each input, this clearly does not change the language
of the DPA.

We believe that the effectiveness is due to
the determinization procedure often returning automata with an unnecessarily large number
of different priorities. Thus, the new priority function may assign edges with initially
distinct priorities the same new priority. Thereby, paths with initially different
priority sequences might become equivalent, ultimately leading to more states that can be
merged.
Because this is a pure post-processing step, it can be integrated
easily into any determinization tool-chain.
 \section{Experiments and Discussion}
\label{sec:experiments}

Our quantitative experiments\footnote{Our prototype can be obtained at \url{https://github.com/apirogov/nbautils}} involved different sets of Büchi automata that can be roughly
grouped into two categories---automata obtained from LTL formulas and arbitrary automata.
Most sets were generated and/or processed using tools included with \texttt{spot}
\cite{duret2016spot} (version 2.6.3). All sets were filtered to exclude already
deterministic Büchi automata. Furthermore, we filtered the sets to include only automata
that are determinizable by \texttt{spot} (via \texttt{autfilt -D -P -{}-high}) in
reasonable time (between 1-30 minutes, depending on the size of the corresponding test
set). This was also our benchmark for comparison. We explicitly compare only the size of
the resulting automata and do not aim for competitive performance.
However, most heuristics we proposed are computationally cheap,
being essentially simple modifications of the successor calculation.

All LTL formulas were transformed to state-based Büchi automata using \texttt{ltl2tgba
-B}. As we are evaluating only the impact of the heuristics during the transformation from
Büchi to parity automata, the method to transform LTL formulas to Büchi automata does not
matter. For the same reason, we do not compare with tools like \texttt{Rabinizer}
\cite{kvretinsky2018rabinizer} that bypass Büchi automata and Safra-style constructions.
We are also aware of the fact that \texttt{spot} might have been able to produce smaller
automata from LTL when using the usually smaller transition-based automata as input. As
all our heuristics and optimizations in principle would work on those as well, we believe
that this does not matter for our evaluation.

We used \texttt{autcross} to obtain statistics about different combinations of our
heuristics as well as to ensure correctness of the generated automata.
The test sets that we used were the following:
\begin{itemize}
  \item[\textbf{raut}:] 1662 random automata generated by \texttt{randaut} with
    5-20 states (9 on average), mostly in one SCC.

  \item[\textbf{rautms}:] 2112 random automata with 10-30 states (13 on average)
    generated by \texttt{randaut}, but filtered to have more than one SCC, specifically
    having at least one rejecting and at least one accepting component.

  \item[\textbf{rltl}:] 11798 automata of various size (8 states on average) from random
    LTL formulas generated by \texttt{randltl}.

  \item[\textbf{gltl}:] 68 automata from various parametrized LTL formula families
    generated by \texttt{genltl} that yield nondeterministic automata
    \cite{gastin2001fast,kupferman2005linear,geldenhuys2006larger,kupferman2010blow,tabakov2010optimized,duret2013manipulating,mullersickertformula}.

  \item[\textbf{scomp}:] 113 automata from LTL formulas extracted from the TLSF-benchmarks
    directory of the SYNTCOMP \cite{jacobs20174th} benchmark repository.

\end{itemize}

To reduce the exponential number of possible combinations of different heuristics and
algorithm variants to a manageable amount, we fixed an order in which to add optimizations
additionally to the previous configuration. We tested this sequence of cumulatively
enabled optimizations for the three successor construction strategies proposed in the
description of the generalized determinization algorithm \cite{unidet}, which correspond
to the Muller-Schupp construction (abbreviated M.-S.), Safra's construction and the
maximally collapsing merge rule (abbreviated Max.). The following list explains the used
parameter combinations.
In the tables, the optimization listed in a column is enabled \emph{in addition} to the
ones used in all previous columns.
\begin{itemize}
  \item[\textbf{def}:] only trimming of the automaton and the known true-loop optimization \cite{klein2005diploma}
  \item[\textbf{+T}:] topological optimization (Sec.~\ref{subsec:topological})
  \item[\textbf{+E}:] known external language inclusion \cite{duret2016spot} (Sec.~\ref{subsec:langinc})
  \item[\textbf{+I}:] internal language inclusion (Sec.~\ref{subsec:langinc})
  \item[\textbf{+M}:] priority minimization and state reduction (Sec.~\ref{subsec:minimize})
  \item[\textbf{+S}:] smart successor selection using tries (Sec.~\ref{subsec:smart})
  \item[\textbf{+A}:] breakpoint construction for accepting SCCs (Sec.~\ref{subsec:NBASCCs})
  \item[\textbf{+W}:] optimization for both, accepting and rejecting SCCs (Sec.~\ref{subsec:NBASCCs})
  \item[\textbf{+D}:] optimization for deterministic SCCs (Sec.~\ref{subsec:NBASCCs})
\end{itemize}
The results for our main benchmark test sets are shown in Table~\ref{tab:experiments}.

\begin{table}[h]
  \def\arraystretch{1.2}
  \setlength{\tabcolsep}{3pt}
  \begin{center}
  \begin{tabular}{r|c|l|llllllll}
    & $\Sigma$ \texttt{spot} & mode & def & +T & +E & +I & +M & +S & +W & +D \\\hline
    & & M.-S. & 2.42 & 2.31 & 2.31 & 2.31 & 1.33 & 1.07 & 1.07 & 1.07 \\
    \textbf{raut} & 334523 & Safra & 1.14 & 1.00 & 1.00 & 1.00 & \textbf{0.80} & 0.82 & 0.82 & 0.82 \\
    & & Max.  & 1.21 & 1.07 & 1.06 & 1.06 & 0.84 & 0.83 & 0.83 & 0.83 \\
    \hline
    & & M.-S. & 1.28 & 1.21 & 0.92 & 0.92 & 0.77 & 0.70 & 0.68 & 0.68 \\
    \textbf{rautms} & 286967 & Safra & 0.70 & 0.70 & 0.68 & 0.68 & 0.62 & 0.61 & \textbf{0.60} & 0.60 \\
    & & Max.  & 0.73 & 0.72 & 0.71 & 0.71 & 0.64 & 0.62 & 0.61 & 0.61\\
    \hline
    & & M.-S. & 3.63 & 2.05 & 1.42 & 1.33 & \textbf{0.84} & 0.93 & 0.94 & 0.90 \\
    \textbf{rltl} & 174251 & Safra & 4.15 & 2.67 & 1.76 & 1.61 & 0.93 & 0.96 & 0.97 & 0.92 \\
    & & Max.  & 3.94 & 2.47 & 1.79 & 1.65 & 0.98 & 0.99 & 1.01 & 0.93 \\
    \hline
    & & M.-S. & 3.12 & 1.98 & 1.97 & 1.87 & 1.18 & 1.08 & 1.07 & 1.00 \\
    \textbf{gltl} & 3658 &  Safra & 2.77 & 1.93 & 1.92 & 1.84 & 1.18 & 1.07 & 1.06 & 0.995 \\
    & & Max.  & 2.39 & 1.71 & 1.70 & 1.62 & 1.00 & 1.05 & 1.05 & \textbf{0.994} \\
    \hline
    & & M.-S. & & & 1.84 & 1.59 & 0.97  & 0.77 & 0.85 & 0.74 \\
    \textbf{scomp} & 14502 & Safra & & & 1.92 & 1.58 & 0.98 & 0.72 & 0.78 & \textbf{0.69} \\
    & & Max.  & & & 1.62 & 1.22 & 0.73 & 0.74 & 0.85 & 0.70 \\
  \end{tabular}
  \end{center}

  \caption{Results of the quantitative experiments. The
  numbers show the fraction of the sum of states obtained by the indicated configuration
  of our implementation, compared to the sum of states of the automata obtained by
  \texttt{spot}. On the SYNTCOMP set we did not evaluate the least optimized
  configurations, as some optimizations were necessary to obtain results in acceptable time.
  }
  \label{tab:experiments}
\end{table}

Our results show that for each test set there is a configuration of our prototype that,
on average, produces automata that are not larger than the ones produced by \texttt{spot}.
In many cases, the best configuration produces automata with up to $40\%$ less states.

Unfortunately, the effect of the post-processing step with the priority minimization
and state reduction (\textbf{+M}) is not robust even under minor variations of
the automaton. This is, e.g., witnessed by the test set \textbf{rltl},
for which the state reduction without the smart successor selection yields
better results.  As only states that agree on all priority sequences
of runs going through them can be merged by Hopcroft's algorithm,
slight variations that lead to ``symmetry breaking'' may already
render states non-mergeable. This optimization is more
costly than the others, because equivalent states must be computed in
the resulting automaton. But we believe that this is one of the most
generally useful optimizations which is worth the additional
computation time, as it quite consistently reduces the number of
states by approximately $20$--$40\%$.

The smart selection of successors (\textbf{+S}) is the only proposed optimization which is ``stateful'',
in the sense that it depends on the already constructed part of the automaton, which in
turn depends on other enabled optimizations. In all but one test sets, this optimization
was especially helpful in reducing states when using the Muller-Schupp successors.
Interestingly, it was most successful in the Safra-based construction for the SYNTCOMP
test set, where this optimization alone was responsible for an additional state reduction
of $26\%$.
The topological optimization (\textbf{+T}) yields a mild state reduction on random automata and
significant reduction of up to $36\%$ in the LTL-based test sets.
\withappendix{
A particularly suggestive demonstration of the topological optimization is given in
Appendix~\ref{app:experiments}, Figure~\ref{fig:goodtopo}.
}

The application of the optimization that exploits language inclusion relations inside of a
single SCC (\textbf{+I}) seems to have no visible effect on random automata, but was responsible for a
decent additional state reduction of $6$--$8\%$ for any variant of the construction in
the LTL-based test sets. In our prototype we used only a simple direct simulation
\cite{etessami2000optimizing} for the optimizations. We are optimistic that using more
involved simulations \cite{etessami2005fair} to under-approximate language inclusion, e.g.
fair simulation, would lead to even better results.

The heuristics for accepting and rejecting SCCs of the NBA (\textbf{+W}), while
sometimes helpful, in other cases lead to an increase of the number of
states. The separate handling of deterministic components, which is
enabled in addition to all other optimizations, shows a positive
effect mainly in the LTL-based automata test sets.

We now give some further examples that illustrate the potential effect of some of the heuristics.
We start with the effect of our state reduction when determinizing the family of
automata introduced by Max Michel in \cite{michelreport} (see also \cite{Thomas97}) and which was used as a benchmark in
\cite{althoff2005observations}. A deterministic automaton recognizing the same language must
have at least $n!$ states and is usually much larger in practice. While in
\cite{althoff2005observations} the Muller-Schupp construction performed significantly worse
than Safra's construction, we were surprised to see that using the Muller-Schupp update in
our own experiments, the Hopcroft minimization is able to drastically reduce the size of the
automaton, producing much smaller automata than any other construction variant we tried
(see Table~\ref{tab:michel}).
\begin{table}[h]
  \begin{center}
  \def\arraystretch{1.2}
  \setlength{\tabcolsep}{3pt}
  \begin{tabular}{c|r|l|rr||c|r|l|rr}
     $n$ & \texttt{spot} & mode & def+TEI & +M
   & $n$ & \texttt{spot} & mode & def+TEI & +M
    \\
    \hline
     2 & 18    & M.-S. & 19  & \textbf{17}
     & 4 & 2284  & M.-S. & 2725 & \textbf{457}
     \\
       &       & Safra & 18  & 18
       &   &       & Safra & 2202 & 2106
     \\
       &       & Max.  & 18  & 18
       &   &       & Max.  & 2094 & 2094
       \\
    \hline
     3 & 145   & M.-S. & 166 & \textbf{82}
     & 5 & 60109 & M.-S. & 72616 & \textbf{2936}
     \\
       &       & Safra & 142 & 142
       &   &       & Safra & 57714 & 57714
       \\
       &       & Max.  & 142 & 142
       &   &       & Max.  & 51094 & 51094
       \\
  \end{tabular}
  \end{center}
  \vspace{-2mm}
  \caption{Results for Michel's automata family (input NBA have $n+1$ states for all $n$).
  The table demonstrates the surprising efficiency of the Muller-Schupp construction with
  subsequent state reduction.
  }
  \vspace{-10mm}
  \label{tab:michel}
\end{table}

The heuristics for handling rejecting/accepting (\textbf{+W}) and deterministic (\textbf{+D})
SCCs of the NBA were enabled in addition to all other heuristics in the
experiments shown in Table~\ref{tab:experiments}, and thus only had a
small or even negative effect.
\withappendix{
  We refer to Appendix~\ref{app:experiments}, Figure~\ref{fig:badforheu} for a concrete example
where SCC-based heuristics can be harmful.
}
In the following we give some examples showing that
these heuristics indeed can have a strong positive effect. The deterministic
SCC optimization is very useful for determinization of automata obtained from
formulas $\varphi_{GH}(n) := \bigwedge_{i=1}^n \textsf{G}\textsf{F}a_i
\lor \textsf{F}\textsf{G}a_{i+1}$ from \cite{geldenhuys2006larger}.
An example where the heuristic which applies the breakpoint-style
construction to accepting SCCs is very helpful is the family of
formulas $\varphi_{MS}(n) := \bigvee_{i=0}^{n}
\textsf{F}\textsf{G}(\lnot^{i} a\lor \textsf{X}^{i} b)$ from
\cite{mullersickertformula}, where the bottom SCCs of the automata are increasingly
complex accepting SCCs (see Table~\ref{tab:goodforheu}).

\begin{table}[h]
  \begin{center}
  \def\arraystretch{1.2}
  \setlength{\tabcolsep}{3pt}
    \begin{tabular}{r|r|rr|rr|rr}
              &               & M.-S. & & Safra & & max. & \\\hline
      formula & \texttt{spot} & def.+TEI & +D & def.+TEI & +D & def.+TEI & +D \\
      \hline
      $\varphi_{GH}(2)$ & 18    & 30 & 32 & 29 & 32 & 28 & 32 \\
      $\varphi_{GH}(3)$ & 108   & 385 & 255 & 327 & 255 & 381 & 266 \\
      $\varphi_{GH}(4)$ & 2143  & 15206 & 5612 & 10922 & 5612 & 12394 & 5036 \\
      \hline
      formula & \texttt{spot} & def.+TEI & +A & def.+TEI & +A & def.+TEI & +A \\
      \hline
      $\varphi_{MS}(2)$ & 21    & 40 & 16 & 40 & 16 & 36 & 16 \\
      $\varphi_{MS}(3)$ & 170   & 371 & 46 & 371 & 46 & 155 & 46 \\
      $\varphi_{MS}(4)$ & 1816  & 4933 & 132 & 4933 & 132 & 620 & 132 \\
      $\varphi_{MS}(5)$ & 22196 & 67173 & 358 & 67173 & 358 & 2419 & 358 \\
    \end{tabular}
  \end{center}

  \caption{
  Results for some instances of the $\varphi_{GH}(n)$ and $\varphi_{MS}(n)$ formulas, with and
  without usage of suitable heuristics.
  Number of states obtained by \texttt{spot} is provided as reference in the tables.
  In case of $\varphi_{GH}$ formulas, $\texttt{spot}$ profits significantly from the
  stutter-invariance optimization \cite{klein2007fly}, which we do not utilize.
  }
  \vspace{-10mm}
  \label{tab:goodforheu}
\end{table}

In summary, we noticed that the positive effects of the heuristics become stronger with growing
size of the input automaton. This is not surprising, as for e.g. the SCC heuristics to
have a positive effect, the automaton needs to be sufficiently complex. Unfortunately,
larger input automata are not suitable for a thorough quantitative analysis in reasonable
time. We are convinced, that for sufficiently large automata, even the heuristics that may
appear not very effective in the presented benchmarks would have a stronger positive
effect on average.

Our results also show that every proposed choice of the $\op{merge}$ operation in the
unified determinization construction seems to be superior in some cases.
The Muller-Schupp update was the best choice on the random LTL test
set, while the maximal collapse update was superior on the test set with the parametrized
LTL formulas. Even though the unoptimized Muller-Schupp update usually seems to perform worst,
combining it with some optimizations makes it a viable choice. The maximal collapsing
update seems to perform comparably with the Safra update, but appears to be slightly worse
in most cases, whereas the well-known Safra update seems to be a good middle ground.

 \vspace{-2mm}
\section{Conclusion}
\label{sec:conclusion}

\vspace{-2mm}
We presented a number of new heuristic optimizations for
the determinization of Büchi automata, and evaluated them in a
prototype implementation using different test sets of automata,
ranging from randomly generated automata to automata constructed from
specifications from the competition SYNTCOMP. Our results show that
these heuristics can significantly reduce the number of states in
comparison with the base construction, and also in
comparison with the current state-of-the-art tool \texttt{spot} for
determinization of NBAs.

In future work, we want to study in more depth the effect of the
heuristic based on language inclusions, by using stronger tools for
identifying language inclusions between states of the Büchi automaton
(currently we are only using direct simulation). We also see further
potential in the smart successor selection, which could be used for
redirecting transitions in the constructed automaton in order to
reduce its size.

\bibliographystyle{splncs04}
\bibliography{literature}

\withappendix{
\newpage
\appendix
\section{Adaptations for the modularized construction}
\label{app:modular}

\vspace{-3mm}

In this section, we provide some more details on our modularized determinization construction
that is based on the construction presented in \cite{unidet} and was sketched in
Section~\ref{sec:modularsketch}.

Fix a Büchi automaton $\mc{A}=(\Sigma,Q,\Delta,Q_0,F)$ and observe that each accepting
run of the Büchi automaton must eventually stay
in the same SCC of $\mc{A}$ forever. Furthermore, a run can switch between components
only a finite number of times. So one can run multiple instances of the
determinization in parallel, each responsible for one (or multiple) of the SCCs of the
Büchi automaton, and try to detect accepting runs that eventually stay in those components.

Formally, let $P = \{ P_i \}_{i=1}^{|P|}$ be an SCC-preserving partition of $Q$,
where SCC-preserving means that for $p,q\in Q$, $\scc(p) = \scc(q)$ implies that $p,q \in
P_i$ for the same $i\in \Nat$.
Fix $m\in \Nat$, denoting the number of determinization tuples that will be run in
parallel, and choose an injective map $\op{prt} : [m] \hookrightarrow P$ that assigns each
tuple a different set of states from the Büchi automaton that it will be responsible for.
The assignment should cover all SCCs of the Büchi automaton which admit accepting runs, i.e.
$Q \setminus R \subseteq \bigcup_{i=1}^m \op{prt}(i)$, where $R$ is the set of
states in rejecting SCCs of $\mc{A}$.

A macrostate $s := ((\alpha_i, t_i)_{i=1}^m, B)$ in the modular construction consists of a
\emph{buffer set} $B\subseteq Q$ and \emph{components} $(\alpha_i,t_i)$ with tuples
$t_i := (S_{i,1}, …, S_{i,n_i})$ equipped with injective functions $\alpha_i : [n_i] \hookrightarrow [n]$
that map each set to its rank, where $n := \sum_{i=1}^m n_i$, $Q_{t_i} := \bigcup_{j=1}^{n_i} S_{i,j}$
and $Q_s := B \cup \bigcup_{i=1}^m Q_{t_i}$.
We require that the sets $\op{img}(\alpha_i)$ partition $[n]$,
$Q_{t_i} \subseteq \op{prt}(i)$ and that all sets of states in $s$ must be disjoint.
We allow the components to be empty, i.e. having $n_i = 0$ (then $\alpha_i$ is just the
empty function).

Whenever a state (of the NBA) is reached from some other state
in the same component, we must follow the
rules of the original procedure to infer accepting runs correctly. However, when
a state is reached only from a state in a different component, then the state must lie on a
run which is not already represented in the tuple for that component. Hence, we can add
this state to the component in such a way that it is not related to
the others and from there on let the original construction handle the rest.

The priorities at the transitions of the original construction are consequence of red and
green ranks that are assigned to sets. When multiple components work in
parallel and one component witnesses an accepting run by signalling green infinitely
often (and red only finitely often), these signals must eventually dominate red signals
from other components. The correct interaction of signals is easily achieved by using a
ranking that totally orders all the sets (which is captured by the definition and
constraints on partial rankings $\alpha_i$), disregarding the separated components, and
updating the ranks according to the rules of the construction. Following the construction
ensures that ``less important'' sets move up in the ranking order whenever a more
important rank has a bad event.

The initial state consists of a macrostate $q_0$ such that all states in $Q_0$ are
distributed into singleton sets in the corresponding tuples $t_i$, with arbitrary
different ranks satisfying the definition of macrostates above.

A transition is defined as follows. Let $s=((\alpha_i, t_i)_{i=1}^m, B)$ be some macrostate.
Similar to the original construction, the successor on some letter $x\in \Sigma$ is obtained by
performing the steps $\op{step}, \op{relocate}, \op{prune}, \op{merge}$ and
$\op{normalize}$, where only the $\op{relocate}$ step is completely new, whereas the other
steps are only small modifications of the original construction.

First, the operation $\op{step}$ is executed on each tuple exactly as before.
Additionally, let $\tilde{B} := \Delta(B, x)$ be the successors of states in the buffer set.
The new operation $\op{relocate}$ collects the set of all wrongly-placed states
from the sets $\tilde{B}$ and $\tilde{Q}_{t_i}$, removes them from those sets and places
them into the correct sets.

Formally, let $W_0 := \tilde{B} \cap \bigcup_{i=1}^m \op{prt}(i)$ and for $1\leq i
\leq m$ let $W_i := \{ q \in Q_{\tilde{t}_i} \mid q \not\in \op{prt}(i) \}$.
Then $\mathring{B} := (\tilde{B} \setminus W_0) \cup ((\bigcup_{i=1}^m W_i) \setminus
\bigcup_{i=1}^m \op{prt}(i))$ is the new buffer set, containing all reached states that do
not belong in any tuple. The tuples $\mathring{t}_i$ are defined by setting
$\mathring{S}_{i,j} := \tilde{S}_{i,j} \setminus W_i$ for each set
$\tilde{S}_{i,j}$ in $\tilde{t}_i$, thereby removing wrongly placed states, and then
adding a new rightmost set $\mathring{S}_{i,new} :=  (\bigcup_{j=0}^m W_j \cap
\op{prt}(i)) \setminus Q_{\tilde{t}_i}$ to the tuple, which contains states that have
predecessors only outside of the tuple. The ranks $\mathring{\alpha}$ for
the modified sets stay the same as before, whereas all the new sets $\mathring{S}_{i,new}$
are assigned some fresh rank higher than all currently assigned ranks, giving us the
intermediate macrostate $((\mathring{\alpha}_i,\mathring{t}_i)_{i=1}^m, \mathring{B})$.

Next, proceed with $\op{prune}$ as in the original construction.
The priority of the transition is obtained in essentially the same way as before, based on
the results of $\op{prune}$ in the tuples. First obtain the sets $G_i$ and $R_i$ of good
and bad ranks of the different tuples according to the original construction, then
consider their union $G:=\bigcup_{i=1}^m G_i$ and $R:=\bigcup_{i=1}^m R_i$. The priority $p$ of the
transition is $2k$ if $k\in G$ and $2k - 1$ otherwise, where $k := \min (G\cupdot R)$ (or
$k := |Q|+1$ if $(G\cupdot R)=\emptyset$) denotes the smallest rank associated with some
event in a tuple.

The $\op{merge}$ operation has constraints based on the smallest rank with a good or bad
event in the transition. Use the most important rank $k$ of the whole macrostate from
above to execute $\op{merge}$ on the individual tuples.
Finally, apply $\op{normalize}$ in such a way that the rank order of all sets in the whole
macrostate is preserved and let $B' := \mathring{B}$, so that in the end we have the
successor macrostate $s'=((\alpha'_i, t'_i)_{i=1}^m, B')$.

A note to readers familiar with Safra's construction---the redistribution of misplaced states
in the $\op{relocate}$ operation essentially introduces a new separate Safra tree root
labelled by the newly introduced states. The tree structure here is encoded in the ranks (more
details on this correspondence can be found in \cite{unidet}). By adding the new rightmost
sets $\mathring{S}_{i,new}$ as described above, each tuple in general contains not just one tree
encoding the relationship of runs that share a common history, but multiple such trees in
an ordered forest, reflecting the ``unknown'' history which was lost in the transition of
the state between the different tuples. The important preserved information is that the
state was actually reached at some time $i$ and from this point on the runs that continue
from there and stay in the same component will be correctly managed like in the original
construction.

 \newpage
\section{Correctness of internal language inclusion optimization}
\label{app:inclusions}

In this section we give a proof sketch for the proposed internal language inclusion
optimization from Section~\ref{subsec:langinc}:

\thmlanginclusions*

\begin{proof}[sketch]
  Let $w\in L(\mc{A})$. We need to show that removing states in this way does not inhibit
  the acceptance of $w$ by the resulting DPA.
  Let the acceptance profile of a run $q_0 q_1 \ldots$ be defined as a sequence
  $z_0 z_1 \ldots \in \{0,1\}^\omega$ such that $z_i = 1$ if $q_i \in F$ and $z_i=0$ otherwise.
  Then let $\rho$ be an accepting run with a lexicographically maximal
  acceptance profile, i.e., an accepting run visiting accepting states
  as early as possible (such a run always exists, see e.g., the greedy
  ordering of runs in \cite{ChoffrutG99}).

  The determinization construction puts states along lexicographically
  larger runs to the left of states along lexicographically smaller
  runs by separating accepting from non-accepting states and keeping
  only leftmost occurrences of each state in $\op{step}$. This order is
  never swapped during the other operations. It can only happen that
  two states are merged into the same set in $\op{merge}$.
  Thus, the construction ensures the following invariant: $\op{idx}_t(q) < \op{idx}_t(p)$ implies
  that a lex.\ maximal run going through $q$ is lex.\ larger than a lex.\ maximal run
  going through $p$ at the same time.
  Hence, in the sequence of macrostates on $w$ in a DPA, no state
  along the run $\rho$ is ever removed by this optimization, because this would contradict
  the maximality of the acceptance profile of $\rho$. Choosing this run $\rho$ in the
  proof of \cite[Lemma 6]{unidet}
  shows that the DPA indeed still accepts $w$. \qed
\end{proof}

\subsection{Modification for the modular construction}

The invariant in the proof above may be violated in the modular construction
(Appendix~\ref{app:modular}) due to states being moved between different tuples, such that
some run which goes through these states may suddenly switch to the right side of a tuple,
even though it is ``better'' than other runs tracked in that tuple. This poses a problem,
if the same run can leave and re-enter the same tuple. Therefore, we need to add an
additional constraint to pairs of states for which we can apply this optimization to
exclude this possibility.

More formally, in the modular construction we may only apply Theorem~\ref{thm:inclusions}
to some pair of NBA states $p,q \in Q$ with $L(p) \subseteq L(q)$ that are assigned to the same
determinization tuple $t_i$ (i.e., $p,q \in \op{prt}(i)$ for some $i\in\Nat$), if every path
from $q$ to $p$ must stay in $\op{prt}(i)$, i.e., can not leave and re-enter $t_i$.
 \section{Details on the accepting SCC heuristic}
\label{app:accsccs}

In this section we describe how the breakpoint construction \cite{miyano1984alternating}
can be internalized in the determinization procedure to handle accepting SCCs of Büchi
automata more efficiently.

The idea of the breakpoint construction is that for accepting SCCs it is sufficient to
check that at least one run eventually stays in the same SCC forever. Because there are no
rejecting loops that could be taken by runs in these SCCs, it suffices to (conceptually)
fix some set of runs that are currently in such SCCs and track sets of states that are
reached by these runs and stay in those SCCs. As long as there is at least one state
left, there must be a run remaining in this SCC and all remaining runs must be accepting.
This reduces the required effort to track accepting runs significantly. One only needs to
ensure the separation of the states in these accepting SCCs reached by the currently
tracked runs from states that were reached only by runs that are currently not followed.
When our currently tracked set becomes empty, we can start the tracking again on (a subset
of) states along the previously unfollowed runs. If it is ensured that every reached state
in the accepting SCCs eventually has the chance to be tracked, an accepting run will
eventually be detected, if one exists.

So we need to manage only two different sets to detect accepting runs in any accepting
SCC---a \emph{track set} and a \emph{background set}. In a transition, the track set is
updated to all successors of the current track set that are in an accepting SCC. The
background set contains all other states from accepting SCCs that are reached in this
transition. Whenever the track set becomes empty (which is a ``breakpoint''), all states
from the background set are moved to the track set. Using this kind of scheme (see e.g.
\cite{boigelot2001use}), it is possible to determinize weak NBA with a blow-up of at most
$3^{n}$, instead of the general upper bound of $2^{\mc{O}(n\log n)}$.

To exploit this approach for general NBA, we utilize our modular determinization
construction (see Appendix~\ref{app:modular}). We use the buffer set $B$ of the
modular macrostates for the background set and a separate determinization component for the track
set, which is formally some ranked tuple $(\alpha_i,t_i)$ with $t_i := (S_{i,1})$
containing a single set with all the states in the track set. This component is
responsible for all accepting SCCs, i.e., $\op{prt}(i) := A$, where $A$ is the union of
all accepting SCCs of the NBA. The operation $\op{relocate}$ is modified in such a way
that reached states in $A$ are collected in the buffer set $B$ first and \emph{do not}
go directly into the responsible component $(\alpha_i, t_i)$, \emph{unless} it becomes
empty during $\op{step}$ (which indicates a breakpoint).

As all states in $A$ are accepting, they all move to the left child during $\op{step}$,
whereas the right child becomes empty. Consequently, the right empty set is removed,
whereas its rank (which is more important) is preserved by assigning it to the left child
containing all the states. This rank, by definition, is in the set $G_i$, signalling a
good event. On the other hand, if the tracked states have no successors, both child sets
will be empty during $\op{step}$, resulting in $\op{prune}$ removing both of them and the
assigned rank (as well as the fresh rank of the left child set) signalling a bad event,
i.e., being in the set $R_i$. But then all other states in $A$ that are currently
untracked are moved into the component by $\op{relocate}$ with a fresh rank of low
importance, thereby ``restarting'' the track set with the states from the background
set. The $\op{merge}$ operation on a single set clearly is always trivial (i.e. can be
skipped).

In practice, the separation of successors into two sets during $\op{step}$ and most of the
mechanics during $\op{prune}$ can be optimized away by making the current rank of the
single set in the tuple always signal green, unless it becomes empty, when it must
signal red. Thereby we can use only a single rank (assigned by $\alpha_i$) for all states
in $A$. If the track set in the tuple eventually never becomes empty, its importance will
increase by following the construction and eventually its green signal will be the most
important in the transition, unless there exists a different more important rank that also
eventually always signals green, so the DPA is guaranteed to accept if there is an
accepting run in the accepting SCCs of the Büchi automaton. On the other hand, if there is
no such run, then the determinization component that runs the breakpoint construction will
become empty and will be ``restarted'' infinitely often and therefore its current rank
will infinitely often signal red and thus will never witness an accepting run.

 \section{Details on the deterministic SCC heuristic}
\label{app:detsccs}

In this section we describe how the determinization construction can be simplified for
SCCs of the NBA that are already deterministic. For this section, fix a deterministic SCC
$D$ of the Büchi automaton. The idea is based on the fact that no run of the Büchi
automaton that currently is in $D$ can ever split up into multiple runs as
long as its states remain in $D$.

Fix (conceptually) a set of runs currently in $D$, i.e., take the set of states
occupied by these runs. Then track for these states which successors also remain in $D$
(if a run splits up, then all but one successors go to a different SCC and are not
tracked). Clearly, as long as the tracked set is not empty, there are still runs in $D$.
Furthermore, if an accepting state is in this set, then it was reached by one of the runs
we are currently tracking. Therefore, if there are accepting states in this set infinitely
often, then at least one of the finitely many runs tracked in the set must visit accepting
states infinitely often, i.e. there exists an accepting run.

While we are tracking this set, some runs from other SCCs may enter $D$ and reach the same
or also different states. Whenever this happens, we start a new tracking set with a lower
priority, which means that whenever both sets reach the same state $q$, only the set with
higher priority gets state $q$. Thus, we keep only one copy of each state and at
the same time lower priority sets never interfere with higher priority sets. Lower
priority sets therefore only track runs that visit states which are not visited by runs
tracked in sets with higher priority at the same time.

Notice how this concept can be easily represented in our modularized construction (see
Appendix~\ref{app:modular}). Assume that some determinization component $(\alpha_i, t_i)$
is responsible for the deterministic SCC $D$ of the NBA, i.e. we have that $\op{prt}(i) =
D$. We can easily adapt the $\op{step}$ operation to \emph{not} split up the set of
successors, so that together with the $\op{relocate}$ operation we get exactly the
behaviour described above, i.e., older sets are located to the left in $t_i$ and are
favoured by the normalized successor calculation, whereas new sets are introduced on the
right by $\op{relocate}$. These sets contain only states that were not reached by the
other sets in the component and get a fresh rank with low importance. States that leave
$D$ are also automatically taken care of by the construction during $\op{relocate}$.
Whenever a tracked set becomes empty, it will be removed by $\op{prune}$ and by
definition, its rank will be in the set $R_i$ representing a bad event. For a rank to
signal a good event, we modify the construction for this component to reflect the
reasoning above, such that a rank $r$ is put into the set $G_i$ whenever the set $S_{i,j}$
in $t_i$ with $\alpha_i(j) = r$ contains an accepting state. The operations $\op{merge}$
and $\op{normalize}$ work as before.

Observe that by disabling the splitting of sets during $\op{step}$ we can hope to
construct less different partitionings of the states in $D$, as new sets only are
introduced by runs coming from outside of $D$. Furthermore, the assigned ranks in
the tuple $t_i$ are always ascending (because the new sets with low importance are always
added on the right), so that the importance order and tuple order agree. This property
restricts the worst-case upper bound on the state space for the
component $(\alpha_i, t_i)$, as only
ascending rankings $\alpha_i$ are used.

When applying this heuristic for the determinization of limit-deterministic automata, one
obtains a variant of the construction described in  \cite{esparza2017ltl}, in
which the states in the deterministic components are organized in a
tuple sorted by the order in which the corresponding run entered the
deterministic part. In our construction, we use a tuple of sets
instead, tracking those runs that enter the deterministic part at the
same time in one set. While the worst-case upper bound for the number
of resulting macrostates is better for the construction in
\cite{esparza2017ltl}, in our experiments we found that using tuples
of sets also works well in the examples.

 \newpage
\section{Details on smart successor selection}
\label{app:smartsucc}

In this section, we describe in more detail how the freedom of choice for the successor
given by the $\op{merge}$ operation of the determinization construction from
\cite{unidet} can be exploited in a computationally feasible way, i.e., how to quickly
search for a candidate successor macrostate for some transition during state exploration of the
DPA in the set of already existing macrostates.

The crucial idea is to structure the set of already visited states in a reasonable way such
that macrostates which share a common prefix in the trie are related wrt.\ the tree
interpretation of the macrostates. This tree is derived from a ranked slice in the
following way. Let $(\alpha, t)$ be a ranked slice with $t=(S_1, \ldots, S_n)$. The tuple
index of the \emph{parent} of $S_{i}$ is the closest index to the right of $i$ that has a
smaller rank, formally defined as
$\parent(i) := \min_{i < k \leq n}\{k \mid \alpha(k) < \alpha(i)\}$.
The ordered tree induced by $\parent$, with siblings in tuple index order, is called the
\emph{rank-tree} of $(\alpha,t)$ (more on this correspondence of ranked slices and trees
can be found in \cite{unidet}).

For example, in the macrostate $(\{q_3,q_4\}^4,\{q_2\}^2,\{q_5,q_6\}^3,\{q_1\}^1)$ from
Figure~\ref{fig:trie}(b), the rightmost index 4 can have no parent and is the root. Its rank 1
is the closest smaller rank for positions 2 and 3 (that have ranks 2 and 3, respectively),
hence we have $\parent(2) = \parent(3) = 4$, whereas the position 1 (with rank 4) has position 2 (with rank 2) as parent, that is $\parent(1) = 2$. The edges shown in Figure~\ref{fig:trie}(a)--(c) correspond to the edges in this tree interpretation, pointing from a parent to its direct descendants.
In
Figure~\ref{fig:trieword}, the trees on the bottom are actually such macrostates, in this
interpretation as trees. The sequence of sets in the upper part of the figure is explained later.

Rank-trees have the property that parents have smaller ranks than children and sibling
nodes have ascending ranks, which easily follows from the definition. So one can interpret
ranks as the order of creation of nodes in the tree, smaller ranks meaning that the
corresponding tree node was created earlier (in practice, multiple nodes that are created
at the same time are eventually assigned different ranks during the $\op{normalize}$ step
of the determinization construction, fixing an arbitrary creation order for them).

We now explain how such rank trees can be encoded by a single sequence of sets (with ranks implicitly encoded), which is the basis for the data structure that we are using.
We present it here for the case of a single tuple, but it can be adapted to the
modularized variant in a straight-forward way.

First, replace each set in the tuple with a
set containing all states in the subtree of the corresponding rank-tree node. Then, sort
these sets in ascending rank order and drop the ranks. For the macrostate
$(\{q_3,q_4\}^4,\{q_2\}^2,\{q_5,q_6\}^3,\{q_1\}^1)$ from Figure~\ref{fig:trie}(b), the resulting sequence is
$\{q_1,\ldots, q_6\}, \{q_2,q_3,q_4\}, \{q_5,q_6\}, \{q_3,q_4\}$.
The resulting sequence of sets $S_1,S_2,\ldots,S_n$ (which are different from the sets in
the macrostate tuple) is essentially just a word over the alphabet $2^Q$ which satisfies
the property that for all $i>1$, it holds that $\emptyset \neq S_i \subset S_k$ for some
$k<i$ and all sets are pairwise different.

Notice that we can interpret such sequences of sets operationally, as a ``recipe''
describing how one can construct a rank-tree, when first starting just with the set of all
states that are located in its nodes. The first set of the sequence is the set of all
states that exist in the tree, and can be seen as a trivial rank-tree
with all states in the root (which must have the rank 1). The following sets in the
sequence can be read as a list of successive instructions, each set $S$ essentially
saying: ``find the current tree node that contains the states in $S$, create a new child
node of that node (assigning it the new highest rank so far) and move the states in $S$ to
the new node''. Thereby, each additional set in the list ``refines'' the tree, until we
get the exact rank-tree we are describing. Notice that each prefix of this sequence also
describes a valid rank-tree with the same states, but with less nodes over which those
states are distributed (see Figure~\ref{fig:trieword}).

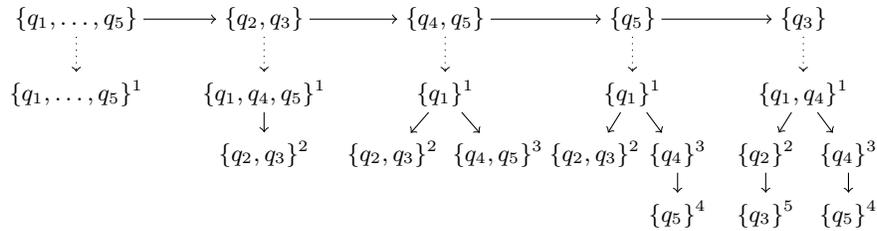
\begin{figure}[htbp]
  \centering
  \begin{tikzpicture}[baseline={([yshift=-.5ex]current bounding box.center)},
    shorten >=1pt,node distance=3mm,inner sep=2pt,auto]

    \node                  (s1) {$\{q_1,\ldots,q_5\}$};
    \node[right=1cm of s1] (s2) {$\{q_2,q_3\}$};
    \node[right=1.2cm of s2] (s3) {$\{q_4,q_5\}$};
    \node[right=1.5cm of s3] (s4) {$\{q_5\}$};
    \node[right=1.5cm of s4] (s5) {$\{q_3\}$};

    \node[below=5mm of s1] (r11) {$\{q_1,\ldots,q_5\}^1$};

    \node[below=5mm of s2] (r21) {$\{q_1,q_4,q_5\}^1$};
    \node[below=of r21] (r22) {$\{q_2,q_3\}^2$};

    \node[below=5mm of s3] (r31) {$\{q_1\}^1$};
    \node[below=of r31,xshift=-7mm] (r32) {$\{q_2,q_3\}^2$};
    \node[below=of r31,xshift=7mm] (r33) {$\{q_4,q_5\}^3$};

    \node[below=5mm of s4] (r41) {$\{q_1\}^1$};
    \node[below=of r41,xshift=-5mm] (r42) {$\{q_2,q_3\}^2$};
    \node[below=of r41,xshift=6mm] (r43) {$\{q_4\}^3$};
    \node[below=of r43] (r44) {$\{q_5\}^4$};

    \node[below=5mm of s5] (r51) {$\{q_1,q_4\}^1$};
    \node[below=of r51,xshift=-5mm] (r52) {$\{q_2\}^2$};
    \node[below=of r51,xshift=6mm] (r53) {$\{q_4\}^3$};
    \node[below=of r52] (r54) {$\{q_3\}^5$};
    \node[below=of r53] (r55) {$\{q_5\}^4$};

    \path[->] (s1) edge (s2) (s2) edge (s3) (s3) edge (s4) (s4) edge (s5);
    \draw (s1) edge[->,dotted] (r11);
    \draw (s2) edge[->,dotted] (r21);
    \draw (s3) edge[->,dotted] (r31);
    \draw (s4) edge[->,dotted] (r41);
    \draw (s5) edge[->,dotted] (r51);

    \path[->] (r21) edge (r22);
    \path[->] (r31) edge (r32) (r31) edge (r33);
    \path[->] (r41) edge (r42) (r41) edge (r43) (r43) edge (r44);
    \path[->] (r51) edge (r52) (r51) edge (r53) (r52) edge (r54) (r53) edge (r55);
  \end{tikzpicture}

  \caption{Example illustrating the semantics of the set sequences which
  describe macrostates in terms of refining their rank-tree. The illustrated rank-trees at
  the bottom correspond to the different prefixes of the set sequence at the top.
  }
  \label{fig:trieword}
\end{figure}

Our approach now is to manage tries that store such sequences of sets, i.e., each trie node
is labelled by a set $S \subseteq Q$ and a marker that indicates whether the sequence from
the root down to the current node is stored in the trie (corresponds to a macrostate of the DPA that is constructed). The words that are added into
this trie are exactly such sequences of sets as described above, so each trie node
corresponds to the macrostate that is described by the set sequence from the root down to
this node. In Figure~\ref{fig:trieword}, the set sequence at the top would be a single
branch of the trie and each set would be a node of the trie that represents a specific
macrostate, which is illustrated below the trie-node as the corresponding rank-tree.

To be more precise, for each set $U \subseteq Q$ of NBA states there is a separate trie
(with the root labelled by $U$) that can be used to store all possible ranked slices (i.e.
macrostates) with exactly the states from $U$, i.e., all macrostates $(\alpha, t)$ with
$Q_t = U$. In practice, these tries are constructed on-the-fly to contain just the
branches corresponding to macrostates that were already added to the DPA (see
Figure~\ref{fig:trie} for an illustration).

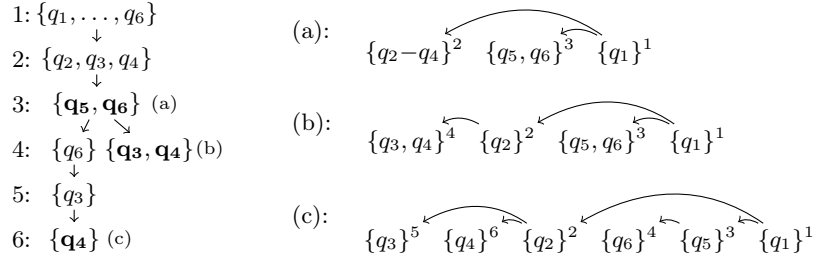
\begin{figure}[htbp]
  \begin{tabular}{p{0.3\textwidth}p{0.7\textwidth}}
    \begin{minipage}{0.3\textwidth}
    \begin{tikzpicture}[baseline={([yshift=-.5ex]current bounding box.center)},
      shorten >=1pt,node distance=6mm,inner sep=1pt,on grid,auto]
      \node              (i1) {1:};
      \node[below=of i1] (i2) {2:};
      \node[below=of i2] (i3) {3:};
      \node[below=of i3] (i4) {4:};
      \node[below=of i4] (i5) {5:};
      \node[below=of i5] (i6) {6:};

      \node[right=of i1,xshift=4mm] (s1) {$\{q_1,\ldots,q_6\}$};
      \node[below=of s1] (s2) {$\{q_2,q_3,q_4\}$};
      \node[below=of s2] (s3) {$\mathbf{\{q_5,q_6\}}$};
      \node[below=of s3,xshift=-3mm] (s4) {$\{q_6\}$};
      \node[below=of s4] (s5) {$\{q_3\}$};
      \node[below=of s5] (s6) {$\mathbf{\{q_4\}}$};
      \node[right=of s4,xshift=4mm] (s7) {$\mathbf{\{q_3,q_4\}}$};

      \node[right=of s3,xshift=3mm] (m1) {\scriptsize (a)};
      \node[right=of s7,xshift=2mm] (m2) {\scriptsize (b)};
      \node[right=of s6,xshift=0mm] (m3) {\scriptsize (c)};

      \path[->]
        (s1) edge (s2)
        (s2) edge (s3)
        (s3) edge (s4)
        (s4) edge (s5)
        (s5) edge (s6)
        (s3) edge (s7)
          ;
    \end{tikzpicture}
    \end{minipage}
      &
    \begin{minipage}{0.7\textwidth}
      (a):
      \quad
    \begin{tikzpicture}[baseline={([yshift=-.5ex]current bounding box.center)},
      shorten >=1pt,node distance=2mm,inner sep=1pt,auto]
      \node              (s1) {$\{q_2{-}q_4\}^2$};
      \node[right=of s1] (s2) {$\{q_5,q_6\}^3$};
      \node[right=of s2] (s3) {$\{q_1\}^1$};

      \path[->]
        (s3) edge[bend right,out=-35] (s1)
        (s3) edge[bend right] (s2)
          ;
    \end{tikzpicture}

      \vspace{3mm}
      (b):
      \quad
    \begin{tikzpicture}[baseline={([yshift=-.5ex]current bounding box.center)},
      shorten >=1pt,node distance=2mm,inner sep=1pt,auto]
      \node              (s1) {$\{q_3,q_4\}^4$};
      \node[right=of s1] (s2) {$\{q_2\}^2$};
      \node[right=of s2] (s3) {$\{q_5,q_6\}^3$};
      \node[right=of s3] (s4) {$\{q_1\}^1$};

      \path[->]
        (s4) edge[bend right,out=-35] (s2)
        (s2) edge[bend right] (s1)
        (s4) edge[bend right] (s3)
          ;
    \end{tikzpicture}

      \vspace{3mm}
      (c):
      \quad
    \begin{tikzpicture}[baseline={([yshift=-.5ex]current bounding box.center)},
      shorten >=1pt,node distance=2mm,inner sep=1pt,auto]
      \node              (s1) {$\{q_3\}^5$};
      \node[right=of s1] (s2) {$\{q_4\}^6$};
      \node[right=of s2] (s3) {$\{q_2\}^2$};
      \node[right=of s3] (s4) {$\{q_6\}^4$};
      \node[right=of s4] (s5) {$\{q_5\}^3$};
      \node[right=of s5] (s6) {$\{q_1\}^1$};

      \path[->]
        (s6) edge[bend right] (s5)
        (s6) edge[bend right,out=-35] (s3)
        (s5) edge[bend right] (s4)
        (s3) edge[bend right] (s2)
        (s3) edge[bend right,out=-35] (s1)
          ;
    \end{tikzpicture}
    \end{minipage}
  \end{tabular}
  \caption{The depicted trie on the left contains three macrostates that correspond to the
            marked trie nodes.
           The macrostate-tuples are depicted on the right, with ranks as
           superscripts and rank-tree edges (defined by parent index function $\parent$)
           connecting parents with children sets.}
  \label{fig:trie}
\end{figure}

Now we explain how to use these tries to aid the successor selection.
Recall that in a transition of the determinization construction, during the $\op{merge}$
operation adjacent sets can be grouped into intervals and merged together. The
nondeterminism in the choice of intervals in this step allows for the multiple valid
successors. In the view of macrostates as rank-trees, it is permitted to merge, for
example, a whole subtree and/or multiple adjacent siblings into one node.

If $k$ is the dominating rank of the transition (obtained after performing
$\op{prune}$), the constraints on $\op{merge}$ ensure that certain nodes with more
important (i.e., smaller) ranks, which are higher up in the rank-tree, more
precisely---all nodes with a rank $<k$, cannot be changed at all. They keep exactly the
same states and retain the same rank as before the $\op{merge}$. This is reflected in the
set sequences described above by the fact that all valid successor macrostates agree on
the prefix of length $k-1$, and only differ in the following sets that describe how the
states are further distributed in the rank-tree. This means, that all valid successors
must be located in the sub-trie that is rooted at the trie node reached after this common
prefix of length $k-1$, which already significantly restricts the search space for a
valid successor state.

Furthermore, each state $q$ that is not in a set of the rank-tree that has a rank $< k$
must eventually appear in some set at trie depth $\geq k$. If it does not, then in the macrostate
described along the current trie branch, $q$ was never ``promoted'' downward in the rank-tree,
violating the fact that nodes that have ranks $<k$ must not be modified by the
$\op{merge}$ operation, and especially must not contain additional states.

On the other hand, observe that one may skip the $\op{merge}$ step completely
(or do a trivial $\op{merge}$ operation that collapses no sets),
which corresponds to the successor agreeing with the Muller-Schupp construction.
Because by performing non-trivial merges, the rank of each state $q$ (i.e., the rank of the
set containing $q$ in the macrostate) may only decrease, the Muller-Schupp successor
carries important information, namely for each state the highest rank of a node (of the
rank-tree) that can have $q$ in its subtree, if the resulting macrostate shall be a valid
successor. Translated to the trie view, this means that for each state of the NBA there is a deepest
level of the trie where it may appear along the corresponding set sequence.

We use the observations above in the following way. During a transition, we first compute
the Muller-Schupp successor (i.e., skip $\op{merge}$) and transform the resulting
macrostate into a sequence of sets, as described above.

First, we check whether the trie node reached by the $k{-}1$-prefix of the set sequence
exists in the trie. If it does not, then there is no viable successor state existing in
the automaton yet, as they all must share this prefix in the trie. Thus, we proceed by
constructing the ``default'' successor, as prescribed by some $\op{merge}$ strategy that
is given as a parameter by the user (e.g. a merge that corresponds to the construction of
Safra, Muller-Schupp, etc.), add this macrostate to the DPA and insert its set sequence
into the trie, i.e., add the branch labelled by those sets and mark the final trie node
reached by this set sequence, indicating that the corresponding macrostate is in the DPA.

If the trie node corresponding to the prefix does already exist in the trie, we search the
trie using a depth-first search. This DFS follows existing branches in the subtrie, as
long as they don't violate the constraints above (which can be computed from the
Muller-Schupp successor we already constructed). Whenever a trie node is reached where
these constraints are violated, the exploration of its sub-trie can be aborted, as the
macrostates represented further below may not arise by a valid $\op{merge}$ operation in
the current transition.

Each trie node which does not violate the constraints, and is also marked, corresponds to
a macrostate which already exists in the automaton and is a candidate for a valid
successor. What remains to be ensured is that the candidate macrostate really can be
obtained by merging adjacent non-empty sets. This is done using a simple unidirectional
scan which compares the Muller-Schupp successor (which we constructed in the beginning) to
the current candidate macrostate (which already existed and just needs to be looked up).
If a valid successor is identified during this exploration, then the DPA just gets a new
transition to this already existing macrostate. Otherwise, we construct the ``default''
successor, add it to the DPA as the target of the current transition and also insert it
into the trie, as described above.

From the practical point of view, all required steps can be carried out efficiently using
bit-set operations, i.e. all set operations can be seen as constant-time operations. For
all checks and computations only a linear (in the number of sets in a macrostate) number
of set operations is required. Hence, the overall slowdown incurred by this optimized
successor selection is very moderate.
 \newpage
\section{Additional examples}
\label{app:experiments}

The example in Figure~\ref{fig:badforheu} illustrates a case where heuristics presented in
Section~\ref{subsec:NBASCCs}, which exploit the SCC-structure of the NBA, can be detrimental
to state reduction and therefore using them requires greater care. One of the reasons is that by adding
more determinization components in the modularized construction (Appendix~\ref{app:modular}),
it is possible that additional new macrostates are constructed due to the relocation of
reached states of the NBA between the different tuples. Therefore, SCC-based heuristics
should be only applied with sufficiently large SCCs, where a large reduction can be
expected due to the separated handling.

\begin{figure}[h]
    \begin{center}
      \begin{tabular}{lcr}
        $\mathcal{B}(n):$ &
    \begin{tikzpicture}[baseline={([yshift=-.5ex]current bounding box.center)},
      shorten >=1pt,node distance=1cm,inner sep=1pt,on grid,auto]
      \node[state,initial left,initial text=] (q1) {$q_1$};
      \node[state] (q2) [right=of q1] {$q_2$};
      \node (qi) [right=of q2] {\ldots};
      \node[state] (qn) [right=of qi] {$q_n$};
      \node[state,accepting] (qf) [right=of qn] {$q_{\scriptscriptstyle F}$};

      \path[->]
        (q1) edge node {$\scriptstyle 2$} (q2)
        (q2) edge node {$\scriptstyle 3$} (qi)
        (qi) edge node {$\scriptstyle n$} (qn)
        (q1) edge[bend left,out=90,in=90] node {$\scriptstyle 1$} (qf)
        (q1) edge[bend left,out=50,in=130] node {$\scriptstyle 3$} (qi)
        (q1) edge[bend left,out=70,in=110] node {$\scriptstyle n$} (qn)
        (q2) edge[bend left,out=50,in=130] node {$\scriptstyle n$} (qn)
        (q2) edge[bend left,out=70,in=110] node {$\scriptstyle 2$} (qf)
        (qn) edge[bend left,out=50,in=130] node {$\scriptstyle n$} (qf)
        (q1) edge[loop below] node {$\scriptstyle\Sigma\setminus\{1\}$}(q1)
        (q2) edge[loop below] node {$\scriptstyle\Sigma\setminus\{2\}$}(q2)
        (qn) edge[loop below] node {$\scriptstyle\Sigma\setminus\{n\}$}(qn)
        (qf) edge[loop below] node {$\scriptstyle\Sigma\setminus\{\#\}$}(qf)
        ;
    \end{tikzpicture} &
  \quad\quad
  \def\arraystretch{1.2}
  \setlength{\tabcolsep}{3pt}
    \begin{tabular}{r|l|ll}
      $n$ & \texttt{spot} & def. & +sep \\
      \hline
      5 & 62  & 62
              & 454 \\
      6 & 126 & 126
              & 2607 \\
      7 & 254 & 254
              & 17612 \\
    \end{tabular}
      \end{tabular}
      \vspace{-5mm}
    \end{center}

  \caption{
  $\mathcal{B}(n)$ is a family of automata with $n+1$ states for each $n$,
  where enabling some of the heuristics that enforce separated handling of SCCs of the NBA leads to more states.
  The table shows the resulting automata sizes from our
  prototype with just basic optimizations, and the same configuration with
  separation of all SCCs into different determinization components (+sep) enabled. For any
  choice of the $\op{merge}$ step, useless permutations of the states are introduced by
  the enforced separation, leading to the blowup.
  }

  \label{fig:badforheu}
\end{figure}
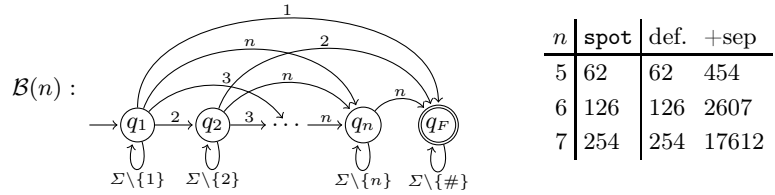

The example in Figure~\ref{fig:goodtopo} demonstrates how well the topological
optimization complements the Muller-Schupp variant of the $\op{merge}$ step in the
determinization construction from Section~\ref{sec:construction}.

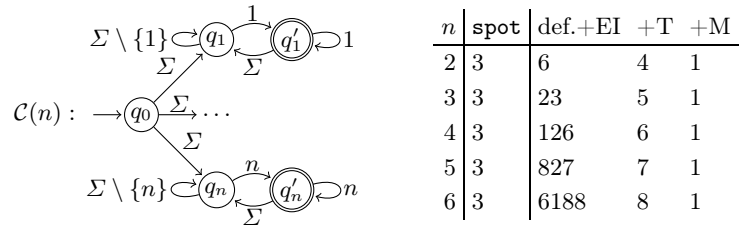
\begin{figure}[h]
  \centering
    \begin{tabular}{lcr}
        $\mathcal{C}(n):$ &
    \begin{tikzpicture}[baseline={([yshift=-.5ex]current bounding box.center)},
      shorten >=1pt,node distance=1cm,inner sep=1pt,on grid,auto]
      \node[state,initial left,initial text=] (q0) {$q_0$};
      \node[right=of q0] (q2) {$\ldots$};
      \node[state,above=of q2] (q1) {$q_1$};
      \node[state,accepting,right=of q1] (q1a) {$q_1'$};
      \node[state,below=of q2] (qn) {$q_n$};
      \node[state,accepting,right=of qn] (qna) {$q_n'$};

      \path[->]
          (q0) edge node {$\Sigma$} (q1)
          (q0) edge node {$\Sigma$} (q2)
          (q0) edge node {$\Sigma$} (qn)
          (q1) edge[loop left] node {$\Sigma\setminus\{1\}$} (q1)
          (qn) edge[loop left] node {$\Sigma\setminus\{n\}$} (qn)
          (q1) edge[bend left] node {$1$} (q1a)
          (qn) edge[bend left] node {$n$} (qna)
          (q1a) edge[bend left] node {$\Sigma$} (q1)
          (qna) edge[bend left] node {$\Sigma$} (qn)
          (q1a) edge[loop right] node {$1$} (q1a)
          (qna) edge[loop right] node {$n$} (qna)
          ;
    \end{tikzpicture} & \quad\quad
  \def\arraystretch{1.2}
  \setlength{\tabcolsep}{3pt}
    \begin{tabular}{r|l|lll}
      $n$ & \texttt{spot} & def.+EI & +T & +M \\
      \hline
      2 & 3 & 6 & 4 & 1 \\
      3 & 3 & 23 & 5 & 1 \\
      4 & 3 & 126 & 6 & 1 \\
      5 & 3 & 827 & 7 & 1 \\
      6 & 3 & 6188 & 8 & 1 \\
    \end{tabular}
    \end{tabular}
  \caption{
    $\mathcal{C}(n)$ is a family of universal automata with $2n+1$ states for each $n$ on which
    using Muller-Schupp successors generates many useless states, which can be
    removed again using the topological optimization, as shown in the table. Together with
    our minimization we obtain the optimal automata.
  }
  \label{fig:goodtopo}
\end{figure}
 }

\end{document}